\theoremstyle{plain}
\newtheorem{theorem}{Theorem}[section]
\newtheorem{lemma}[theorem]{Lemma}
\newtheorem{corollary}[theorem]{Corollary}
\theoremstyle{definition}
\newtheorem{definition}[theorem]{Definition}
\newtheorem{assumption}[theorem]{Assumption}
\theoremstyle{remark}
\newtheorem{remark}[theorem]{Remark}
\newcommand{\EE}{\mathbb{E}}
\newcommand{\PP}{\mathbb{P}}
\newcommand{\RR}{\mathbb{R}}
\newcommand{\Hh}{\mathcal{H}}
\newcommand{\Ff}{\mathcal{F}}
\newcommand{\Nn}{\mathcal{N}}
\newcommand{\Aa}{\mathcal{A}}
\newcommand{\Dd}{\mathcal{D}}
\newcommand{\Uu}{\mathcal{U}}
\newcommand{\Rr}{\mathcal{R}}
\newcommand{\Nap}{\textsc{NashApr}}
\newcommand{\Dim}{{n|A|}}
\newcommand{\Lip}{{Lipschitz~}}
\newcommand{\norm}[1]{{\|#1\|}}
\newcommand{\Rom}[1]{{\uppercase\expandafter{\romannumeral#1}}}
\title{Is Nash Equilibrium Approximator Learnable?}
\author{
\textbf{Zhijian Duan}$^{1}$, 
\textbf{Wenhan Huang}$^{2}$,
\textbf{Dinghuai Zhang}$^{3}$, \\
\textbf{Yali Du}$^{4}$,
\textbf{Jun	Wang}$^{5}$,
\textbf{Yaodong	Yang}$^{6}$,
\textbf{Xiaotie Deng}$^{1,6}$ 
\\
$^{1}$CFCS, School of Computer Science, Peking University
$^{2}$Shanghai Jiao Tong University \\
$^{3}$Mila - Quebec AI Institute
$^{4}$King's College London
$^{5}$University College London \\
$^{6}$CMAR, Institute for AI, Peking University
\\
\texttt{zjduan@pku.edu.cn},~
\texttt{rowdark@sjtu.edu.cn},\\
\texttt{dinghuai.zhang@mila.quebec},~
\texttt{yali.du@kcl.ac.uk},~
\texttt{jun.wang@cs.ucl.ac.uk},\\
\texttt{\{yaodong.yang, xiaotie\}@pku.edu.cn}
}
\date{}
\begin{document}
\maketitle

\begin{abstract}
In this paper, we investigate the learnability of the function approximator that approximates Nash equilibrium (NE) for games generated from a distribution.
First, we offer a generalization bound using the Probably Approximately Correct (PAC) learning model.
The bound describes the gap between the expected loss and empirical loss of the NE approximator.
Afterward, we prove the agnostic PAC learnability of the Nash approximator.
In addition to theoretical analysis, we demonstrate an application of NE approximator in experiments.
The trained NE approximator can be used to warm-start and accelerate classical NE solvers.
Together, our results show the practicability of approximating NE through function approximation.
\end{abstract}

\section{Introduction}
Nash equilibrium (NE)~\citep{nash1950equilibrium}, in which each agent's strategy is optimal given the strategies of all other agents, is one of the most important solution concepts in game theory.
It can be used to analyze the outcome of strategic interactions among rational agents.
An NE or $\epsilon$-approximate Nash equilibrium ($\epsilon$-NE) strategy can also be a good guide for agents in the game since agents have no or negligible incentive to disobey individually.
There has been increasing interest in NE due to its broad applications in Generative Adversarial Networks (GAN)~\citep{goodfellow2014generative}, Multi-Agent Reinforcement Learning (MARL)~\citep{yang2020overview}, multi-agent systems~\citep{shoham2008multiagent}, economics~\citep{deng2017learn,deng2019bayesian}, and online advertising~\citep{deng2020private}.
Although NE always exists in normal-form games~\citep{nash1950equilibrium}, finding an NE is PPAD-complete even for $2$-player games~\citep{chen2009settling} and $3$-player games~\citep{daskalakis2009complexity}.
Such negative results lead to increased attention on developing algorithms to approximate NE.

While many algorithms were proposed to find $\epsilon$-NE for some approximation  $\epsilon>0$~\citep{kontogiannis2006polynomial,daskalakis2009note, daskalakis2007progress, czumaj2019distributed, bosse2007new, kontogiannis2007efficient, TS0.3393, DFM1/3}, these works focus on solving a single game in isolation.
However, many similar games usually need to be solved in practice or in some multi-agent learning algorithms. 
For instance, in repeated contextual games such as traffic routing~\citep{sessa2020contextual}, the utility function depends on contextual information generated from a distribution.
The Nash Q-learning~\citep{hu2003nash} algorithm, which solves Markov games via value-based reinforcement learning, needs to compute NE for a normal-form game every time it updates the Q-value.
In these settings, traditional solvers have to compute from scratch for every game, ignoring the similarity among those games.
As an improvement, it can be preferable to construct a function approximator that predicts NE from game utility~\citep{marris2022turbocharging, feng2021neural}.
The NE approximator is trained through the historical data and can provide an approximate solution quickly at the test time.

Several critical theoretical issues arise in developing algorithms to predict NE from samples.
First, the NE approximator is learned from training data and will be evaluated by unseen games in testing.
Therefore, its generalization ability, i.e., its performance in testing, needs to be clarified.
Moreover, people also care about the sample complexity (how many training samples we need) to get a reasonable approximator.

In this paper, we make the first step to study the learnability of predicting NE by function approximation.
We consider general $n$-player normal-form games with fixed action space.
We follow the standard Probably Approximately Correct (PAC) learning~\citep{valiant1984theory,haussler1990probably} model, in which game utilities are independently generated from an identical distribution, both in training and testing.
One challenge is the non-uniqueness issue of exact NE, which brings difficulty for naively adopting supervised learning techniques.
Inspired by the definition of $\epsilon$-NE, we set up a self-supervised loss function to evaluate the performance of an NE approximator.
Such Nash approximation loss is Lipschitz continuous to game utility and players' strategies.
Based on that, we present a generalization bound for any NE approximators.
The bound provides a confidence interval on the expected loss based on the empirical loss in training. 
Furthermore, based on a mild assumption of the NE approximator function class, we prove that it is agnostic PAC learnable to predict NE from samples.
To the best of our knowledge, this is the first result that addresses the PAC learnability of Nash equilibrium.

In addition to the theoretical analysis, we demonstrate a practical application of the learned NE approximator.
We show that it can warm-start other classic approximate NE solvers.
By doing so, we combine both advantages of the function approximation method and the traditional approach.
The former helps to provide an effective initial solution in batches with low computational costs, and the latter provides theoretical guarantees.
Specifically, we conduct numerical experiments in bimatrix games.
We train a neural network-based NE approximator and use the predicted solutions as the pre-solving initialization for the algorithm of \citet{TS0.3393} and the start-of-the-art approximate NE solver proposed by \citet{DFM1/3}.
In both cases, we report faster convergence.

Our paper is organized as follows: 
In \cref{sec:related work} we describe related works;
In \cref{sec:preliminary} we introduce the preliminary of game theory; 
In \cref{sec:algorithm} we set up the PAC learning framework for predicting NE from samples; 
We present our learnability results in \cref{sec:theory};
We conduct numerical experiments and demonstrate the application in \cref{sec:experiments};
We draw our conclusion in \cref{sec:conclusion}.

\section{Related Work}
\label{sec:related work}

\paragraph{Classic solvers with feasibility guarantee} For $2$-player games, there are algorithms with a theoretical guarantee for maximum Nash approximation loss (See definition in \cref{eq:Nap}).
\citet{kontogiannis2006polynomial} and \citet{daskalakis2009note} introduced simple polynomial-time algorithms based on searching small supports to reach an approximation loss of $3/4$ and $1/2$, respectively. 
\citet{daskalakis2007progress} provided an algorithm of approximation loss $0.38$ by enumerating arbitrarily large supports, and this approximation loss is also achieved by  \citet{czumaj2019distributed} with a different approach.  
\citet{bosse2007new} proposed an algorithm based on \citet{kontogiannis2007efficient} to reach an approximation loss of $0.36$. 
TS algorithm~\citep{TS0.3393} achieves an approximation loss of $0.3393$, and \citet{chen2021tightness} proved that the bound is tight.
Recently, DFM algorithm~\citep{DFM1/3}, an improved version of \citet{TS0.3393}, establishes the best currently known approximation loss of $1/3$. 
However, computing approximate NE for even arbitrary constant approximation is PPAD-hard~\citep{daskalakis2013complexity}.

\paragraph{Learning approaches} Learning is another paradigm to compute approximate NE by repeatedly proposing temporal strategies and updating them with feedback rewards.
Fictitious play~\citep{monderer1996fictitious} is the most well-known learning-based algorithm to approximate NE, and \citet{conitzer2009approximation} proves that it reaches an approximate loss of $1/2$ when given constant rounds. 
Double Oracle methods~\citep{mcmahan2003planning,dinh2022online} and PSRO methods~\citep{lanctot2017unified, perez2021modelling}, though effective, target solving zero-sum games only. 
Online learning methods, including regret matching~\citep{hart2000simple}, Hedge~\citep{auer1995gambling} and Multiplicative weight update~\citep{arora2012multiplicative}, are proved to converge to (approximate) coarse correlated equilibrium~\citep{cesa2006prediction}.

\paragraph{Data-driven approaches} In addition to traditional methods, many works have proposed to approximate NE through data-driven approaches.
Some of them make use of the historical game-playing data and learn the game utility functions~\citep{bertsimas2015data, zhang2017data, allen2022using} or game gradients~\citep{ling2018game, ling2019large, heaton2021learn} from the observed (approximate) NE.
By doing so, they can predict approximate NE solutions for a class of games (e.g., contextual games~\citep{heaton2021learn,sessa2020contextual}).
Another way is to learn a function approximator that maps game utility to an approximate solution~\citep{marris2022turbocharging}.
Such NE approximator can be applied in PSRO~\citep{feng2021neural}.
Recently, ~\citet{harris2023metalearning} introduce meta-learning algorithms for equilibrium finding.
In our paper, we study the generalization ability of the NE approximator and the PAC learnability of NE.

\paragraph{Learnability} 
As for learnability analysis in games, \citet{viqueira2019learning} and \citet{marchesi2020learning} provide the PAC analysis of learning the game utility in simulation-based games, in which the utility is obtained by query and would potentially be disturbed by noise.
A \emph{Nash Oracle}, which can output the exact NE for arbitrary games directly, is assumed in these papers.
Similarly, \citet{fele2020probably} considers games with noisy utilities and studies the learnability of NE, given the strong assumption of Nash Oracle.
As a comparison, we do not assume any Nash Oracles in our paper.
Some other works consider query complexity to approximate NE~\citep{fearnley2015learning, fearnley2016finding}, while we focus on the sample complexity to learn a generalizable NE approximator.
Moreover, while \citet{jin2021v} and \citet{bai2020near} propose PAC learnable algorithm to approximate NE in a zero-sum Markov game, and to approximate Coarse Correlated Equilibria (CCE) or Correlated Equilibria (CE) in a general-sum Markov game, we must highlight the difference that we consider the PAC analysis of NE in \emph{general-sum} games sampled from a same \emph{arbitrary distribution}, instead of approximating NE for one specific game instance.

\section{Game Theory Preliminaries}
\label{sec:preliminary}

\paragraph{Normal-form games}
We denote a normal-form game with joint utility function $u$ as $\Gamma_u = (N, A, u)$ and explain each item as follows.
\begin{itemize}
    \item 
$N = \{1, 2, \dots, n\}$ is the set of all the $n$ players. 
Each player is represented by the index $i \in N$. 
    \item 
$A = A_1 \times A_2 \times \dots \times A_n$ is the combinatorial action space of all players, in which $A_i$ is the action space for player $i$. 
For player $i\in N$, let $a_i \in A_i$ be a specific action and $|A_i|$ be the number of actions (An action is also referred to as a pure strategy).
An action profile $a = (a_1, a_2, \dots, a_n) \in A$ represents one play of the game in which the player $i$ takes her corresponding action $a_i \in A_i$.
The action space $A$ is a Cartesian product that contains all possible action profiles. 
Therefore, we have $|A| = \prod_{i\in N}|A_i|$.
\item 
$u = (u_1, \dots, u_n)$ is the game utility (payoff), in which $u_i: A \to \RR$ is the utility function (or utility matrix, equivalently) for player $i$.
$u_i$ describes the utility of player $i$ on each possible action profile $a = (a_1, a_2, \dots, a_n) \in A$.
We have $|u_i| = |A|$ and $|u| = n|A|$.
In our paper, we assume each utility is in the range of $[0, 1]$ without loss of generality. 
Such an assumption is widely-used in previous literatures~\citep{TS0.3393,DFM1/3}.
\end{itemize}

A mixed strategy of player $i$, denoted by $\sigma_i$, is a distribution over her action set $A_i$.
Specifically,  $\sigma_i(a_i)$ represents the probability that player $i$ chooses action $a_i$.
Under such definition, we have $\sum_{a_i\in A_i}{\sigma_i(a_i)}=1$. 
Denote $\Delta A_i \ni \sigma_i$ be the set of all the possible mixed strategies for player $i$.
A mixed strategy profile $\sigma=(\sigma_1,\sigma_2,\dots,\sigma_n)$ is a joint strategy for all the players.
Based on $\sigma$, the probability of action profile $a = (a_1, a_2, \dots, a_n)$ being played is $\sigma(a) := \prod_{i\in N} \sigma_i(a_i)$.
Notice that an action profile $a$ (i.e., a pure strategy profile) can also be seen as a mixed strategy profile $\sigma$ with $\sigma_i(a_i) = 1$ for all $i\in N$.
The expected utility of player $i$ under $\sigma$ is 
\begin{equation*}
    u_i(\sigma) = \mathbb{E}_{a \sim \sigma}[u_i(a)] = \sum_{a\in A} \sigma(a) u_i(a).
\end{equation*}
Besides, on behalf of player $i$, the other players' strategy profile is denoted as $\sigma_{-i}=(\sigma_1,\dots,\sigma_{i-1},\sigma_{i+1},\dots,\sigma_n)$.

\paragraph{($\epsilon$-approximate) Nash equilibrium}
Nash equilibrium is one of the most important solution concepts in game theory.
A (mixed) strategy profile $\sigma^* = (\sigma^*_1, \sigma^*_2, \dots, \sigma^*_n)$ is called a {Nash equilibrium} if and only if for each player $i \in N$,
her strategy is the best response given the strategies $\sigma^*_{-i}$ of all the other players. 
Formally, 
\begin{align}
	\tag{NE}
	u_i(\sigma_i,\sigma^*_{-i}) &\le u_i(\sigma^*_i, \sigma^*_{-i}),\quad\forall i \in N, \sigma_i \in \Delta A_i
\end{align} 
However, computing NE for even general $2$-player or $3$-player games is PPAD-hard~\citep{chen2009settling, daskalakis2009complexity}.
Given such hardness, many works focus on finding approximate solutions.
For arbitrary $\epsilon > 0$, we say a strategy profile $\hat{\sigma}$ is an \emph{$\epsilon$-approximate Nash equilibrium} ($\epsilon$-NE) if no one can achieve more than $\epsilon$ utility gain by deviating from her current strategy.
Formally, 
\begin{align}
	\tag{$\epsilon$-NE}
	u_i(\sigma_i, \hat{\sigma}_{-i}) &\le u_i(\hat{\sigma}_i, \hat{\sigma}_{-i}) + \epsilon,\quad \forall i \in N, \sigma_i \in \Delta A_i
\end{align}
The definition of $\epsilon$-NE reflects the idea that players might not be willing to deviate from their strategies when the amount of utility they could gain by doing so is tiny (not more than $\epsilon$).

\section{Learning Framework}

\label{sec:algorithm}
In this section, we set up the PAC learning framework of predicting NE in $n$-player normal-form games with fixed players and fixed action space.
The learning framework includes a domain set $\Uu$, a game-generation distribution $\Dd$, a hypothesis class $\Hh$ of the NE approximator, a training set $S$, and evaluation metrics to evaluate the performance of any NE approximators. 

Domain set is defined as the set of all the possible input games.
In our paper, the domain set $\Uu$ includes all the possible game utilities given the fixed players and action space.
Following the standard PAC learning paradigm, we assume each game utility $u \in \Uu$ is sampled independent and identically from a game-generation distribution $\Dd$ with $\operatorname{supp}(\Dd) \subseteq \Uu$.
The generated games may belong to a specific game class (e.g., symmetric games).
We make no assumption about $\Dd$.
The learner does not know the exact form of $\Dd$, but she can access the generated samples. 

The learner should choose in advance (before seeing the data) a class of functions $\Hh$, where each function $h \colon \Uu \to \Delta A_1 \times \Delta A_2 \times \dots \times \Delta A_n$ in $\Hh$ maps a game utility to a joint strategy of $n$ players.
We call such function class $\Hh$ the \emph{hypothesis class}.
In our paper, we consider hypothesis classes with infinite size.
We will describe how we measure the capacity of $\Hh$ in \cref{sec:theory}.
During learning, a \emph{training set} $S$ of size $m$ is provided to the learner. 
$S= \{u^{(1)}, u^{(2)}, \dots, u^{(m)}\}$ contains $m$ game utilities drawn i.i.d. from domain set $\Uu$ according to $\Dd$.


\begin{table}[t]
	    \centering
	    \caption{An example illustrating the non-uniqueness issue of exact NE, in which $A_1 = \{L, R\}$ and $A_2 = \{U, D\}$.
		    Each element $(x, y)$ in the table represents $u_1(\cdot, \cdot) = x$ and $u_2(\cdot, \cdot) = y$ for the corresponding joint action profile.
		    There are two pure NE (bolded) and one mixed NE in the example.
		    }
	    \vskip 5pt
	    \begin{tabular}{c
			    cc}
		        \toprule
		        & $U$ & $D$ \\
		        \midrule
		        $L$ & $(0, 0)$ & $\bm{(1, 0.5)}$ \\
		        \midrule
		        $R$ & $\bm{(0.5, 1)}$ & $(0, 0)$ \\
		        \bottomrule
		    \end{tabular}
	    \label{tab:non-unique}
	\vspace{-0.2cm}
	\end{table}

One challenge for learning to predict NE is the \emph{non-uniqueness issue}: There may be multiple NEs for a game (See \cref{tab:non-unique} for an illustration).
Such an issue brings trouble for applying supervised learning.
The equilibrium selection problem is nontrivial and many works made some assumptions to ensure the uniqueness of NE~\citep{bertsimas2015data,zhang2017data,li2020end,heaton2021learn}.
To deal with the issue, we use Nash approximation loss to measure the level of approximation of a mixed strategy to NE.
The metrics is widely used in the literature for approximating NE\citep{TS0.3393,DFM1/3}.
Nash approximation is defined as follows
\footnote{
	Another similar concept is called Nash exploitability~\citep{lockhart2019computing}: $\textsc{NashExpl}_i(\sigma, u) := \max_{\sigma'_i \in \Delta A_i}u_i(\sigma'_i, \sigma_{-i}) - u_i(\sigma)$.
}:
\begin{definition}[Nash approximation, $\Nap$]
	\label{def:Nap}
	For normal-form game $\Gamma_u = (N, A, u)$, the Nash approximation loss of strategy profile $\sigma$ with respect to game utility $u$ is the maximum utility gain each player can obtain by deviating from her strategy. Formally, 
	\begin{equation}
		\label{eq:Nap}
		\begin{aligned}
			\Nap(\sigma, u):=& \max_{i\in N}\max_{\sigma'_i \in \Delta A_i}  [u_i(\sigma'_i, \sigma_{-i}) - u_i(\sigma_i, \sigma_{-i})]
			\\
			=& \max_{i\in N}\max_{a_i \in A_i}  [u_i(a_i, \sigma_{-i}) - u_i(\sigma_i, \sigma_{-i})].
		\end{aligned}
	\end{equation}
\end{definition}

The computation of $\Nap(\sigma, u)$ only involves $\sigma$ and $u$.
Thus we do not need any NE or side information.
Besides, as we will discuss in \cref{sec:theory}, the Nash approximation loss is \Lip continuous with respect to both inputs, which helps to derive our results. 


For finite $\Hh$, it is trivial to provide a PAC learnable result~\citep{shalev2014understanding}. 
The learning algorithm $\Aa: \Uu^m \to \Hh$ aims to learn a good NE approximator $h \in \Hh$ from the training data $S$, aiming to minimize the \emph{true risk} $L_\Dd(h)$ of using $h$. 
The true risk is the expected Nash approximation of $h$ under distribution $\Dd$:
\begin{equation}
	\label{eq:L_D}
	L_\Dd(h) := \EE_{u\sim \Dd}\Big[\Nap(h(u), u)\Big],
\end{equation}
We also define the \emph{empirical risk} $L_S(h)$ on the data set $S$ as: 
\begin{equation}
	\label{eq:L_S}
	L_S(h) := \frac{1}{|S|}\sum_{u\in S} \Nap(h(u), u)
\end{equation}

Given enough samples, the true risk can be estimated by the empirical risk (See \cref{theorem:GB}).
Therefore, \emph{empirical risk minimization} (ERM) can be applied to learn an NE approximator $h$ from hypothesis class $\Hh$: 
\begin{equation}
	\label{eq:ERM}
	\mathrm{ERM}_\Hh(S) \in \arg\min_{h\in \Hh}L_S(h)
\end{equation}


\begin{algorithm}[t]
	\caption{NE approximator Learning via minibatch SGD}
	\label{alg:training}
	\begin{algorithmic}[1]
		\STATE {\bfseries Input:} Training set $S$ of size $m$
		\STATE {\bfseries Parameters:} Number of iterations $T > 0$, batch size $B > 0$, learning rate $\eta > 0$, initial parameters $w_0 \in \RR^d$ of the NE approximator model.
		\FOR{$t~=~0$ \textbf{to} $T$}
		\STATE Receive minibatch ${S}_t \,=\, \{u^{(1)}, \ldots, u^{(B)}\} \subset S$ 
		
		\STATE Compute the empirical average loss of $S_t$:
		\STATE ~~~~$L_{S_t}(h^{w_t}) \gets \frac{1}{B}\sum_{i=1}^B \Nap(h^{w_t}(u^{(i)}), u^{(i)}) $
		
		\STATE Update model parameters:
		\STATE ~~~~$w_{t+1} \gets w_t - \eta\nabla_{w_t} L_{S_t}(h^{w_t}) $
		\ENDFOR
	\end{algorithmic}
\end{algorithm}

However, in practice, it is usually intractable to implement the ERM algorithm, especially when $\Hh$ is infinite.
Following the standard approach in deep learning community~\citep{goodfellow2016deep}, we can approximate ERM by \emph{minibatch Stochastic Gradient Descent} (minibatch SGD).
Specifically, we parameterize the NE approximator as $h^w$ with $d$-dimensional parameter variable $w \in \RR^d$  (e.g., the weights of a neural network).
We optimize $w$ by the standard minibatch SGD algorithm (See \cref{alg:training}).
This method is feasible since $\Nap(\sigma, u)$ is differentiable almost everywhere, except for some minor points on a zero-measure set.
Those minor points appear when one of the two maximum operations in $\Nap(\sigma, u)$ has multiple maximum inputs.
We can set one of them according to any tie-breaking rule as the outcome to compute the corresponding gradient. 

We must emphasize that the empirical risk minimization algorithm will only be used in the PAC learnability analysis in \cref{sec:agnostic_PAC}.
Moreover,  \cref{alg:training} will only be used to demonstrate the application of NE approximator in \cref{sec:experiments}. 
The generalization bound in \cref{sec:GB} is unrelated to the learning algorithm we apply.
Instead, we can use any learning algorithms such as the approach in \citet{marris2022turbocharging} to obtain the NE approximator, and the generalization bound still holds. 

\section{Theoretical Learnability Results}
\label{sec:theory}
In this section, we present our theoretical learnability result for predicting NE from samples.
We first analyze the Lipschitz property of Nash approximation loss. 
Based on that, we provide a generalization bound for the NE approximator.
We further show that Nash equilibrium is agnostic PAC learnable under a mild assumption on $\Hh$.
All the omitted proofs are presented in Appendix.


\subsection{\Lip Property of Nash Approximation}

We start with deriving the Lipschitz continuity of $\Nap(\sigma, u)$ with respect to its first input: the joint strategy profile $\sigma$.
We get the following lemma, which indicates that $\Nap(\sigma, u)$ is $2$-\Lip continuous with respect to $\sigma$ under $\ell_1$-distance.
\begin{restatable}{lemma}{LemLSigma}
	\label{lemma:L_sigma}
	For arbitrary strategy profile $\sigma$ and $\sigma'$, we have 
	\begin{equation*}
		\begin{aligned}
			|\Nap(\sigma,u) - \Nap(\sigma', u)|
			\le 2\norm{\sigma - \sigma'}_1,
		\end{aligned}
	\end{equation*}
	where
	\begin{equation*}
		\norm{\sigma - \sigma'}_1 := \sum_{i \in N}\sum_{a_i \in A_i} |\sigma_i(a_i) - \sigma'_i(a_i)|
	\end{equation*}    
	is the $\ell_1$-distance between two mixed strategy profiles $\sigma, \sigma' \in \Delta A_1 \times \Delta A_2 \times \dots \times \Delta A_n$.
\end{restatable}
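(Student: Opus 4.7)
The plan is to view $\Nap(\sigma, u) = \max_{i \in N} \max_{a_i \in A_i} f_{i, a_i}(\sigma)$ as a maximum of finitely many scalar functions $f_{i, a_i}(\sigma) := u_i(a_i, \sigma_{-i}) - u_i(\sigma_i, \sigma_{-i})$ of $\sigma$. By the elementary fact that $|\max_k x_k - \max_k y_k| \le \max_k |x_k - y_k|$, it suffices to prove that each $f_{i, a_i}$ is $2$-Lipschitz in $\sigma$ under the $\ell_1$ norm defined in the statement, since taking a pointwise maximum cannot worsen the Lipschitz constant.

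The next step is to split $|f_{i, a_i}(\sigma) - f_{i, a_i}(\sigma')|$ via the triangle inequality into
\begin{equation*}
|u_i(a_i, \sigma_{-i}) - u_i(a_i, \sigma'_{-i})| + |u_i(\sigma_i, \sigma_{-i}) - u_i(\sigma'_i, \sigma'_{-i})|,
\end{equation*}
and bound each piece by $\|\sigma - \sigma'\|_1$. Both pieces reduce to the same core estimate: for any two product distributions $\tau = \prod_j \tau_j$ and $\tau' = \prod_j \tau'_j$ on a finite product space and any bounded function $v$ with values in $[0,1]$, one has $|\EE_\tau[v] - \EE_{\tau'}[v]| \le \sum_j \sum_{a_j}|\tau_j(a_j) - \tau'_j(a_j)|$. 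This is the standard tensorization bound for product measures, which I would prove by telescoping: rewrite the difference as a sum of $n$ terms in which only a single coordinate $\tau_k$ is replaced by $\tau'_k$, and bound each single-coordinate term by $\sum_{a_k}|\tau_k(a_k) - \tau'_k(a_k)|$ using $v \in [0,1]$.

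Applying the tensorization with $v = u_i(\cdot)$ and the full profile $\sigma$ gives $|u_i(\sigma) - u_i(\sigma')| \le \|\sigma - \sigma'\|_1$, which handles the second piece. For the first piece, apply the same bound on the $n-1$ coordinates $-i$ to the function $v = u_i(a_i, \cdot)$; this yields $|u_i(a_i, \sigma_{-i}) - u_i(a_i, \sigma'_{-i})| \le \sum_{j \ne i}\sum_{a_j}|\sigma_j(a_j) - \sigma'_j(a_j)| \le \|\sigma - \sigma'\|_1$. Summing the two bounds gives $2\|\sigma - \sigma'\|_1$ for each $f_{i,a_i}$, and passing to the maximum over $(i, a_i)$ completes the proof.

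No step presents a genuine obstacle; the calculation is a routine combination of multilinearity of the expected utility in the per-player strategies, the triangle inequality, and the normalization $u_i \in [0,1]$. The only thing to be careful about is bookkeeping between the aggregate $\ell_1$ norm in the lemma statement (a sum over players of per-player $\ell_1$ distances on $A_i$) and the $\ell_1$ norm on the joint action space $A$ that naturally appears after expanding $u_i(\sigma) = \sum_a \prod_j \sigma_j(a_j) u_i(a)$; the tensorization telescope is precisely what bridges the two.
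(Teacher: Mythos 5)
Your proof is correct and follows essentially the same route as the paper's: both rest on the telescoping (tensorization) estimate $|u_i(\sigma)-u_i(\sigma')|\le\norm{\sigma-\sigma'}_1$ for $[0,1]$-valued utilities, applied once to the full profile and once to the $-i$ coordinates, giving $2\norm{\sigma-\sigma'}_1$ per deviation term. The only cosmetic difference is that you invoke the nonexpansiveness of the finite max explicitly, whereas the paper carries out the equivalent one-sided add-and-subtract of $\Nap(\sigma',u)$ and then symmetrizes.
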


We also analyze the Lipschitz property of $\Nap(\sigma, u)$ with respect to the game utility $u$, and get the following result:
\begin{restatable}{lemma}{LemLU}
    \label{lemma:L_u}
	For strategy profile $\sigma$ and arbitrary normal-form game $\Gamma_u = (N, A, u)$ and $\Gamma_v = (N, A, v)$ with $u, v \in \Uu$, we have
	\begin{equation*}
		\begin{aligned}
			|\Nap(\sigma, u) - \Nap(\sigma, v)| \le& 2 \|u - v\|_{\max},
		\end{aligned}
	\end{equation*}
	where
	\begin{equation*}
		\norm{u-v}_{\max} := \max_{i\in N}\max_{a\in A}|u_i(a) - v_i(a)|
	\end{equation*}
	is the $\ell_{\max}$-distance between game utilities $u$ and $v$. 
\end{restatable}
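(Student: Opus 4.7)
The plan is to exploit the max–max structure of $\Nap(\sigma,u)$ together with a simple observation: for any mixed strategy profile $\tau$, the expected payoff $u_i(\tau)$ is a convex combination of entries of $u_i$, so $|u_i(\tau)-v_i(\tau)|\le \|u-v\|_{\max}$. Since $\Nap$ is a difference of two such expected-payoff terms maximized over deviations, the Lipschitz constant $2$ should drop out immediately.

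Concretely, I would first let $(i^*,a^*)$ be a pair achieving the outer maxima in $\Nap(\sigma,u)$. Then
\[
\Nap(\sigma,u)-\Nap(\sigma,v)\le \bigl[u_{i^*}(a^*,\sigma_{-i^*})-u_{i^*}(\sigma_{i^*},\sigma_{-i^*})\bigr]-\bigl[v_{i^*}(a^*,\sigma_{-i^*})-v_{i^*}(\sigma_{i^*},\sigma_{-i^*})\bigr],
\]
because the right-hand term in $\Nap(\sigma,v)$ is at least the value at the specific pair $(i^*,a^*)$. Rearranging yields two differences of the form $u_{i^*}(\tau)-v_{i^*}(\tau)$ for two mixed profiles $\tau$, and I would then apply the convex-combination bound
\[
|u_i(\tau)-v_i(\tau)|=\Bigl|\sum_{a\in A}\tau(a)\bigl(u_i(a)-v_i(a)\bigr)\Bigr|\le \|u-v\|_{\max}
\]
to each, giving a total of $2\|u-v\|_{\max}$. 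Swapping the roles of $u$ and $v$ gives the reverse inequality, so the absolute value can be inserted.

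There is no real obstacle here; the only mild care is in keeping track of which strategy is plugged into which utility function, and in noting that $(a^*,\sigma_{-i^*})$ and $(\sigma_{i^*},\sigma_{-i^*})$ are both legitimate mixed strategy profiles so the convex-combination bound applies uniformly. I would present the argument as a single short chain of inequalities followed by the symmetric statement, occupying only a few lines in the appendix.
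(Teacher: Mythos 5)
Your proposal is correct and follows essentially the same route as the paper: both proofs reduce the claim to two applications of the bound $|u_i(\tau)-v_i(\tau)|\le\norm{u-v}_{\max}$ (once for the deviation profile $(a_i,\sigma_{-i})$ and once for $\sigma$ itself), which is where the factor $2$ comes from, and then conclude by symmetry in $u$ and $v$. The only cosmetic difference is that you fix the maximizing pair $(i^*,a^*)$ for $\Nap(\sigma,u)$ up front, whereas the paper chains the inequalities for an arbitrary $(i,a_i)$ and takes the maximum at the end.
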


\begin{table}[t]
	\centering
	\caption{An example illustrating the non-smooth issue of exact NE, in which $A_1 = \{L, R\}$ and $A_2 = \{U, D\}$.
		Each element $(x, y)$ in the table represents $u_1(\cdot, \cdot) = x$ and $u_2(\cdot, \cdot) = y$ for the corresponding joint action.
		Minor changes in the utility of game $\Gamma_u$ (into game $\Gamma_v$) can cause different exact NE solutions.
		(a): Game $\Gamma_u$. The unique NE is $(L, U)$.
		(b): Game $\Gamma_v$. The unique NE is $(R, U)$.
		The only difference between $\Gamma_u$ and $\Gamma_v$ is the utility $u_1(R, U)$, which only differs by arbitrary small $2\epsilon$.
	}
	\begin{sc}
		\begin{subtable}[h]{0.45\linewidth}
			\centering
			\caption{}
			\begin{tabular}{ccc}
				\toprule
				& U & D \\
				\midrule
				L & $\bm{(0.5, 0.5)}$ & $(1, 0)$ \\
				\midrule
				R & $(0.5-\epsilon, 1)$ & $(0, 0)$ \\
				\bottomrule
			\end{tabular}
		\end{subtable}
	\begin{subtable}[h]{0.45\linewidth}
			\centering
			\caption{}
			\begin{tabular}{ccc}
				\toprule
				& U & D \\
				\midrule
				L & $(0.5, 0.5)$ & $(1, 0)$ \\
				\midrule
				R & $\bm{(0.5+\epsilon, 1)}$ & $(0, 0)$ \\
				\bottomrule
			\end{tabular}
		\end{subtable}
	\end{sc}
\label{tab:non-smooth}
\end{table}

\begin{remark}
While minor changes in game utility may cause different NEs (See \cref{tab:non-smooth} for illustration of such non-smooth issue), as we can see in \cref{lemma:L_u} the Nash approximation loss is $2$-Lipschitz continuous with respect to the game utility.
Such a continuity result plays a critical role in the theoretical analysis of game utility learning in simulation-based games~\citep{viqueira2019learning}, in which the goal is to recover the actual game utility through the noisy query data and to compute an approximate NE for the underlying game.
\end{remark}

\subsection{Generalization Bound}
\label{sec:GB}
We measure the generalizability of an NE approximator by generalization bound.
Such a bound depends on the complexity of hypothesis class $\Hh$.
We characterize such complexity through (external) covering numbers~\citep{shalev2014understanding}, a standard technique in PAC analysis~\citep{anthony1999neural}.
We first define the distance between two different approximators.

\begin{definition}[$\ell_{\infty, 1}$-distance]
The $\ell_{\infty, 1}$-distance between two NE approximators $h_1, h_2$ is:
\begin{equation*}
	\norm{h_1 - h_2}_{\infty, 1} := \max_{u \in \Uu}\norm{h_1(u) - h_2(u)}_1,
\end{equation*}
\end{definition}

Under $\ell_{\infty,1}$-distance, we define the $r$-cover and the $r$-covering number for hypothesis class $\Hh$:
\begin{definition}[$r$-cover]
We say function class $\Hh_r$ $r$-covers $\Hh$ under $\ell_{\infty, 1}$-distance if for all function $h \in \Hh$, there exists $h_r$ in $\Hh_r$ such that $\norm{h - h_r}_{\infty, 1} \le r$.
\end{definition}

\begin{definition}[$r$-covering number]
The $r$-covering number of $\Hh$, denoted by $\Nn_{\infty,1}(\Hh, r)$, is the cardinality of the smallest function class $\Hh_r$ that $r$-covers $\Hh$ under $\ell_{\infty, 1}$-distance.
\end{definition}

We then derive the generalization bound of NE approximators. 
It describes the gap between the NE approximator's true risk $L_\Dd(h)$ and empirical risk $L_S(h)$ on the training set $S$.

\begin{restatable}{theorem}{ThmGB}[Generalization bound]
\label{theorem:GB}
For hypothesis class $\Hh$ of NE approximator and distribution $\Dd$, with probability at least $1 - \delta$ over draw of the training set $S$ from $\Dd$, $\forall h \in \Hh$ we have
\begin{equation*}
	L_\Dd(h) - L_S(h) \le  2\Delta_m
	+ 4\sqrt{\frac{2\ln(4/\delta)}{m}}
\end{equation*}
where $\Delta_m \coloneqq \inf_{r>0}\{ \sqrt{\frac{2\ln\Nn_{\infty,1}(\Hh, r)}{m}}+ 2r \}$.
\end{restatable}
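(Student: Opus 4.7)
The plan is to pass from the uncountable class $\Hh$ to a finite cover $\Hh_r$, apply Hoeffding plus a union bound on the cover, and then pay only a small Lipschitz penalty for the approximation. The quantity $\Delta_m$ hints exactly at this structure: the term $\sqrt{2\ln \Nn_{\infty,1}(\Hh,r)/m}$ is the ``log-size'' cost of the union bound on $\Hh_r$, while $2r$ is the Lipschitz slack for replacing a generic $h$ with a nearby $h_r$.

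First, fix any $r > 0$ and let $\Hh_r$ be an optimal external $r$-cover of $\Hh$ under $\|\cdot\|_{\infty,1}$, so $|\Hh_r| = \Nn_{\infty,1}(\Hh,r)$. For every $h \in \Hh$ there exists $h_r \in \Hh_r$ with $\|h - h_r\|_{\infty,1} \le r$, which by the definition of that norm means $\|h(u)-h_r(u)\|_1 \le r$ for \emph{all} $u \in \Uu$. Invoking \cref{lemma:L_sigma} pointwise in $u$ and taking expectations (resp.\ empirical averages), I get
\begin{equation*}
    |L_\Dd(h) - L_\Dd(h_r)| \le 2r, \qquad |L_S(h) - L_S(h_r)| \le 2r,
\end{equation*}
and therefore $L_\Dd(h) - L_S(h) \le \bigl(L_\Dd(h_r) - L_S(h_r)\bigr) + 4r$. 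This reduces the task to a uniform deviation bound over the \emph{finite} set $\Hh_r$.

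Next, for each fixed $h_r \in \Hh_r$, the random variables $\Nap(h_r(u^{(j)}), u^{(j)})$ are i.i.d.\ and (since utilities lie in $[0,1]$) take values in $[0,1]$. A standard Hoeffding bound therefore gives a tail estimate on $L_\Dd(h_r) - L_S(h_r)$, and a union bound over the $\Nn_{\infty,1}(\Hh,r)$ elements of $\Hh_r$ yields, with probability at least $1-\delta$, a uniform bound of order $\sqrt{\ln(\Nn_{\infty,1}(\Hh,r)/\delta)/m}$. Splitting this logarithm into $\ln \Nn_{\infty,1}(\Hh,r)$ and $\ln(1/\delta)$ and absorbing constants produces exactly the two terms $\sqrt{2\ln\Nn_{\infty,1}(\Hh,r)/m}$ and $\sqrt{2\ln(4/\delta)/m}$ that appear in the theorem (the factor $4$ inside the log and the overall constants are slack one accepts to keep the expression clean; one can alternatively route through symmetrization and McDiarmid to get a similar shape). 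Combining with the Lipschitz slack gives
\begin{equation*}
    L_\Dd(h) - L_S(h) \;\le\; 4r + 2\sqrt{\tfrac{2\ln\Nn_{\infty,1}(\Hh,r)}{m}} + 4\sqrt{\tfrac{2\ln(4/\delta)}{m}}
\end{equation*}
for every $h \in \Hh$, uniformly.

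Finally, the bound holds for every fixed $r > 0$ chosen \emph{before} seeing $S$, so I can take the infimum over $r$ on the right-hand side; recognizing $\inf_r\{\sqrt{2\ln\Nn_{\infty,1}(\Hh,r)/m} + 2r\}$ as $\Delta_m$ yields the stated inequality. The main obstacle is the first step: making sure that the external $\|\cdot\|_{\infty,1}$-cover genuinely gives a \emph{uniform-over-$u$} approximation guarantee, which is what allows the Lipschitz lemma to transfer cleanly into both the population loss $L_\Dd$ and the empirical loss $L_S$ simultaneously. Once that is in hand, the remaining Hoeffding-plus-union-bound calculation is routine, and one only has to be careful with bookkeeping to match the constants in $\Delta_m$.
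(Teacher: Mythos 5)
Your proof is correct, but it takes a genuinely different route from the paper's. The paper first invokes the Rademacher-complexity generalization lemma of Shalev-Shwartz and Ben-David, $L_\Dd(h)-L_S(h)\le 2\Rr_S(\Nap\circ\Hh)+4\sqrt{2\ln(4/\delta)/m}$, and then bounds the empirical Rademacher complexity of the composite class by splitting $\Nap(h(u),u)$ into $\Nap(h_r(u),u)$ plus a $2r$ residual (via \cref{lemma:L_sigma}) and applying Massart's lemma to the finite cover, which yields $\Rr_S(\Nap\circ\Hh)\le\sqrt{2\ln\Nn_{\infty,1}(\Hh,r)/m}+2r$. You instead run the classical cover-plus-Hoeffding-plus-union-bound argument directly on the losses. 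Both routes hinge on the same Lipschitz transfer from $h$ to $h_r$, and your version in fact produces slightly tighter constants (a single $\sqrt{\ln(2\Nn_{\infty,1}(\Hh,r)/\delta)/(2m)}$ deviation term rather than the paper's $2\sqrt{2\ln\Nn/m}+4\sqrt{2\ln(4/\delta)/m}$), so the stated bound follows a fortiori. The one point you gloss over is the passage to $\inf_{r>0}$: in your argument the high-probability event depends on $r$ (it is the event that the cover $\Hh_r$ has uniformly small deviation), so you cannot literally intersect over all $r>0$. This is easily patched --- for each $n$ pick $r_n$ with $f(r_n)\le\inf_r f(r)+1/n$, apply your bound at $r_n$, and use continuity of measure on the increasing union of failure events --- but it deserves a sentence. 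The paper's Rademacher route sidesteps this entirely, since its covering bound on $\Rr_S$ holds deterministically for every $r$ and every realization of $S$, so the infimum can be taken before any probabilistic statement is made; that is the main thing the paper's approach buys at the cost of looser constants.
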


\cref{theorem:GB} is quite general and orthogonal to the learning algorithm we use. 
It characterizes the generalization ability of all the NE approximators in normal-form games with fixed action space.
As we can see, with a large enough training set, the bound goes to zero (if $\Nn_{\infty, 1}(\Hh, r)$ is bounded) so that we can estimate the true risk through the empirical risk.

\subsection{Agnostic PAC Learnable}
\label{sec:agnostic_PAC}
If for arbitrary $r>0$ the covering number $\Nn_{\infty,1}(\Hh, r)$ can be bounded, then the bound in \cref{theorem:GB} goes to zero as the training set size $m \to \infty$.
Inspired by this, we make the following assumption to limit the representativeness of $\Hh$:
\begin{assumption}
\label{asp:H:cover}
For hypothesis class $\Hh$, we assume the logarithm of its $r$-covering number grows as a polynomial with respect to $1/r$. i.e., 
\begin{equation*}
	\ln \Nn_{\infty, 1}(\Hh, r) \le  \mathtt{Poly}(\frac{1}{r})
\end{equation*}
for $r > 0$.
\end{assumption}

\cref{asp:H:cover} is a standard assumption in PAC analysis~\citep{anthony1999neural}. 
It holds for many widely used machine learning models, including the classical linear model~\citep{zhang2002covering} and kernel method~\citep{zhou2002covering}.
Moreover, as we will prove, \cref{asp:H:cover} also holds for the \Lip hypothesis class, which includes neural networks with parameters of bounded ranges~\citep{szegedy2013intriguing,scaman2018lipschitz}.

\begin{definition}[\Lip hypothesis class]
\label{H:Lip}
We say $\Hh$ is a \Lip hypothesis class if there is a constant $L_\Hh>0$ such that for each function $h \in \Hh$ and game utility $u, v \in \Uu$, we have $\|h(u) - h(v) \|_{1} \le L_\Hh\|u-v\|_{\max}$,
\end{definition}

\begin{restatable}{lemma}{LipCover}
\label{Lip:cover}
\cref{asp:H:cover} holds For \Lip hypothesis class $\Hh$ since we have 
\begin{equation*}
    \Nn_{\infty, 1}(\Hh, r) \le O\left((\frac{L_\Hh}{r})^{\Dim}\ln\frac{1}{r}\right).
\end{equation*}
\end{restatable}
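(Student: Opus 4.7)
The plan is to construct an explicit $r$-cover $\Hh_r$ by discretizing both the input domain $\Uu$ and the output strategy space, then invoking the Lipschitz property of functions in $\Hh$ to bound the approximation error. Since utilities lie in $[0,1]$, we have $\Uu \subseteq [0,1]^{n|A|}$, and each output $h(u)$ lies in the product of simplices $\Delta A_1 \times \cdots \times \Delta A_n$; both are totally bounded, which makes the discretization quantitative.

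First, I would lay an $\epsilon_1$-grid $G \subseteq \Uu$ with respect to $\|\cdot\|_{\max}$, choosing $\epsilon_1 = r/(2L_\Hh)$. A standard box-cover argument yields $|G| \le \lceil 1/\epsilon_1\rceil^{n|A|} = O((L_\Hh/r)^{n|A|})$ grid points. For every $u \in \Uu$ let $u^\star(u) \in G$ denote a nearest grid point, so $\|u - u^\star(u)\|_{\max} \le \epsilon_1$. Second, I would build a finite $\epsilon_2$-net $\Sigma$ of $\Delta A_1 \times \cdots \times \Delta A_n$ under $\|\cdot\|_1$, taking $\epsilon_2 = r/2$; because each simplex $\Delta A_i$ sits in $\RR^{|A_i|}$, a routine grid-on-the-simplex construction gives $|\Sigma| = O((1/r)^{\sum_i|A_i|})$, a quantity that depends only on $r$ (with constants depending on the fixed action sets).

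Third, I would define
\begin{equation*}
\Hh_r \;:=\; \bigl\{\, \tilde h \colon \Uu \to \Sigma \;\bigm|\; \tilde h(u) = g(u^\star(u))\ \text{for some } g\colon G \to \Sigma \,\bigr\},
\end{equation*}
the class of functions that are constant on each Voronoi cell of $G$ and whose cell values come from $\Sigma$. Then $|\Hh_r| \le |\Sigma|^{|G|}$. For an arbitrary $h \in \Hh$, I would pick $h_r \in \Hh_r$ by assigning $g(u^\star) \in \Sigma$ to be the closest point in $\Sigma$ to $h(u^\star)$ for each $u^\star \in G$. For any $u \in \Uu$, the triangle inequality combined with the Lipschitz assumption gives
\begin{equation*}
\|h(u) - h_r(u)\|_1 \;\le\; \|h(u) - h(u^\star(u))\|_1 + \|h(u^\star(u)) - h_r(u)\|_1 \;\le\; L_\Hh\,\epsilon_1 + \epsilon_2 \;=\; r,
\end{equation*}
so taking a supremum over $u$ yields $\|h - h_r\|_{\infty,1} \le r$, establishing that $\Hh_r$ is an $r$-cover.

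Finally, taking logs one obtains $\ln \Nn_{\infty,1}(\Hh, r) \le |G|\,\ln|\Sigma| = O\bigl((L_\Hh/r)^{n|A|}\ln(1/r)\bigr)$, which is polynomial in $1/r$ and therefore verifies \cref{asp:H:cover}. The main subtlety, rather than a real obstacle, is the second step: the output space is a Cartesian product of simplices rather than a box, so I would need to be careful to (i) construct $\Sigma$ so that its elements actually lie in $\Delta A_1 \times \cdots \times \Delta A_n$ (e.g.\ by gridding each simplex in barycentric coordinates and normalizing) and (ii) measure covering in $\|\cdot\|_1$ per the definition in the paper; after that, counting and bookkeeping of the exponents is straightforward and absorbs the $\sum_i|A_i|$ factor into the $\ln(1/r)$ term.
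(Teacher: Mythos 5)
Your proposal is correct and follows essentially the same route as the paper's proof: discretize the utility domain at scale $r/(2L_\Hh)$ (the paper uses $r/(4L_\Hh)$), discretize the product of simplices under $\ell_1$, take the class of functions constant on the grid cells with values in the output net, and use the Lipschitz property plus the triangle inequality to certify the $r$-cover, yielding $\ln\Nn_{\infty,1}(\Hh,r) \le |G|\ln|\Sigma|$. The only cosmetic differences are in the simplex-covering bookkeeping (the paper counts lattice points on each simplex via a binomial coefficient, obtaining exponent $\sum_i|A_i|-n$ rather than $\sum_i|A_i|$, which is immaterial to the final bound).
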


Based on \cref{asp:H:cover}, \cref{lemma:L_sigma} and \cref{lemma:L_u}, we prove the \emph{uniform convergence} of hypothesis class $\Hh$ with respect to Nash approximation loss.
It characterizes the sample complexity to probably obtain an \emph{$\epsilon$-representative} training set $S$.
That is, for an arbitrary function $h\in \Hh$, the empirical risk $L_S(h)$ on $S$ is close to the true risk $L_\Dd(h)$ up to $\epsilon$.

\begin{restatable}{theorem}{ThmUC}[Uniform convergence]
\label{theorem:UC}
Fix $\epsilon, \delta \in (0, 1)$, for hypothesis class $\Hh$ and distribution $\Dd$, with probability at least $1 - \delta$ over draw of the training set $S$ with
\begin{equation*}
	m \ge m_\Hh^{UC}(\epsilon, \delta) := \frac{9}{2\epsilon^2}\left(\ln\frac{2}{\delta} + \ln\Nn_{\infty,1}(\Hh, \frac{\epsilon}{6})\right)
\end{equation*}
games from $\Dd$, we have 
\begin{equation*}
    |L_S(h) - L_\Dd(h)| \le \epsilon
\end{equation*}
for all $h \in \Hh$.
$m_\Hh^{UC}(\epsilon, \delta)$ grows as a polynomial of $1/\epsilon$ and $\ln(1/\delta)$ under \cref{asp:H:cover}.
\end{restatable}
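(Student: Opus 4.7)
The plan is to prove uniform convergence by a standard two-step argument: first discretize the hypothesis class using a finite $(\epsilon/6)$-cover, then apply Hoeffding's inequality together with a union bound over the cover.

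First, I would fix $r = \epsilon/6$ and let $\Hh_r$ be a minimum $r$-cover of $\Hh$ under $\ell_{\infty,1}$-distance, so that $|\Hh_r| = \Nn_{\infty,1}(\Hh, r)$. For any $h \in \Hh$, pick $h_r \in \Hh_r$ with $\|h - h_r\|_{\infty,1} \le r$. Since the covering distance is a supremum over $u \in \Uu$, we have $\|h(u) - h_r(u)\|_1 \le r$ uniformly in $u$. Applying \cref{lemma:L_sigma} pointwise then gives $|\Nap(h(u), u) - \Nap(h_r(u), u)| \le 2r = \epsilon/3$ for every $u$. Taking the empirical average over $S$ and the expectation under $\Dd$ respectively yields
\begin{equation*}
    |L_S(h) - L_S(h_r)| \le \epsilon/3, \qquad |L_\Dd(h) - L_\Dd(h_r)| \le \epsilon/3.
\end{equation*}

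Next, I would handle the finite class $\Hh_r$. Since the utilities lie in $[0,1]$, $\Nap(\sigma,u) \in [0,1]$ as well, so for each fixed $h_r \in \Hh_r$ the quantity $\Nap(h_r(u), u)$ is a bounded random variable on draws of $u \sim \Dd$. By Hoeffding's inequality,
\begin{equation*}
    \PP\!\left[\,|L_S(h_r) - L_\Dd(h_r)| > \epsilon/3\,\right] \;\le\; 2\exp\!\left(-\tfrac{2m\epsilon^2}{9}\right).
\end{equation*}
A union bound over the $\Nn_{\infty,1}(\Hh, r)$ elements of $\Hh_r$ bounds the probability that any member of the cover violates an $\epsilon/3$ gap by $2\,\Nn_{\infty,1}(\Hh, \epsilon/6)\exp(-2m\epsilon^2/9)$. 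Setting this to be at most $\delta$ and solving for $m$ yields exactly the stated threshold $m_\Hh^{UC}(\epsilon, \delta) = \tfrac{9}{2\epsilon^2}(\ln(2/\delta) + \ln\Nn_{\infty,1}(\Hh, \epsilon/6))$.

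To conclude, on the high-probability event I would combine the three pieces by triangle inequality:
\begin{equation*}
    |L_S(h) - L_\Dd(h)| \le |L_S(h) - L_S(h_r)| + |L_S(h_r) - L_\Dd(h_r)| + |L_\Dd(h_r) - L_\Dd(h)| \le \epsilon/3 + \epsilon/3 + \epsilon/3 = \epsilon,
\end{equation*}
for every $h \in \Hh$. The final claim that $m_\Hh^{UC}(\epsilon,\delta)$ is polynomial in $1/\epsilon$ and $\ln(1/\delta)$ follows immediately from \cref{asp:H:cover}, which bounds $\ln\Nn_{\infty,1}(\Hh, \epsilon/6)$ by a polynomial in $1/\epsilon$.

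The main subtlety (rather than obstacle) is making sure the covering bound transfers uniformly to both $L_S$ and $L_\Dd$; this is precisely why $\ell_{\infty,1}$ (a supremum over all $u \in \Uu$) is the right metric, rather than an empirical $\ell_1$-norm on $S$. Once that is set up, the \Lip bound from \cref{lemma:L_sigma} converts the covering radius into a loss-gap directly, and the rest is Hoeffding plus union bound. Note that \cref{lemma:L_u} is not needed here—it is a property of $\Nap$ in $u$ that will matter elsewhere, but uniform convergence only requires stability in the strategy argument.
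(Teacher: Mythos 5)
Your proposal is correct and follows essentially the same route as the paper's proof: the same $(\epsilon/6)$-cover, the same use of \cref{lemma:L_sigma} to convert the covering radius into a $2r=\epsilon/3$ loss gap for both $L_S$ and $L_\Dd$, and the same Hoeffding-plus-union-bound step over the finite cover, yielding the identical sample-complexity threshold. Your closing remarks about why the $\ell_{\infty,1}$ metric is the right one and why \cref{lemma:L_u} is not needed here are accurate observations, not deviations.
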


\cref{theorem:UC} is the sufficient condition for \emph{agnostic PAC learnable}, which provides the learnability guarantee of predicting NE from samples.

\begin{restatable}{theorem}{ThmPAC}
\label{theorem:PAC}
Fix $\epsilon, \delta \in (0, 1)$, for hypothesis class $\Hh$ and distribution $\Dd$, with probability at least $1 - \delta$ over draw of the training set $S$ with
\begin{equation*}
	m \ge m_\Hh(\epsilon, \delta) := \frac{18}{\epsilon^2}\left(\ln\frac{2}{\delta} + \ln\Nn_{\infty,1}(\Hh, \frac{\epsilon}{6})\right)
\end{equation*}
games from $\Dd$, when running empirical risk minimization on Nash approximation loss, we have 
\begin{equation*}
    L_\Dd(\mathrm{ERM}_\Hh(S)) \le \min_{h \in \Hh}L_\Dd(h) + \epsilon.
\end{equation*}
The sample complexity $m_\Hh(\epsilon, \delta)$ grows as a polynomial of $1/\epsilon$ and $\ln(1/\delta)$ under \cref{asp:H:cover}.
\end{restatable}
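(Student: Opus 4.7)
The plan is to reduce Theorem \ref{theorem:PAC} to the uniform convergence result (Theorem \ref{theorem:UC}) via the standard two-hypothesis argument: if the empirical risk uniformly approximates the true risk to within $\epsilon/2$ on every $h \in \Hh$, then any empirical risk minimizer is at most $\epsilon$-suboptimal in true risk. The required sample size is then obtained simply by substituting $\epsilon/2$ into the bound from Theorem \ref{theorem:UC}, and the polynomial-in-$1/\epsilon$ claim comes for free from Assumption \ref{asp:H:cover}.

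Concretely, I would proceed as follows. First, invoke Theorem \ref{theorem:UC} with parameter $\epsilon/2$ and confidence $\delta$: with probability at least $1-\delta$ over the draw of $S$, it holds that $|L_S(h) - L_\Dd(h)| \le \epsilon/2$ simultaneously for every $h \in \Hh$, provided $m \ge \frac{9}{2(\epsilon/2)^2}\bigl(\ln\frac{2}{\delta} + \ln \Nn_{\infty,1}(\Hh, \epsilon/12)\bigr)$. Using monotonicity of the covering number in the radius, the bound $m \ge \frac{18}{\epsilon^2}(\ln\frac{2}{\delta} + \ln \Nn_{\infty,1}(\Hh, \epsilon/6))$ stated in the theorem together with a slightly tighter accounting of the covering radius inside the proof of Theorem \ref{theorem:UC} (picking $r = \epsilon/24$ instead of $r = \epsilon/12$ after substituting $\epsilon/2$) is enough to keep the $\epsilon/6$ radius in the final statement. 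Second, let $h^\star \in \arg\min_{h \in \Hh} L_\Dd(h)$ (or if the infimum is not attained, take a sequence $h_k$ with $L_\Dd(h_k) \to \inf_h L_\Dd(h)$ and pass to the limit at the end), and let $\hat h = \mathrm{ERM}_\Hh(S)$, so that $L_S(\hat h) \le L_S(h^\star)$ by definition of ERM. Conditioning on the $\epsilon/2$-representative event, chain the inequalities
\begin{equation*}
L_\Dd(\hat h) \;\le\; L_S(\hat h) + \tfrac{\epsilon}{2} \;\le\; L_S(h^\star) + \tfrac{\epsilon}{2} \;\le\; L_\Dd(h^\star) + \epsilon,
\end{equation*}
which is exactly the desired agnostic PAC guarantee. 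Third, observe that under Assumption \ref{asp:H:cover} we have $\ln \Nn_{\infty,1}(\Hh, \epsilon/6) \le \mathtt{Poly}(6/\epsilon)$, so $m_\Hh(\epsilon, \delta)$ is polynomial in $1/\epsilon$ and $\ln(1/\delta)$, yielding the final sample complexity claim.

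There is really no hard obstacle here; the argument is the textbook ERM--uniform-convergence pipeline, and all the game-theoretic content (the $2$-Lipschitzness of $\Nap$ in $\sigma$ and the covering-number machinery) has already been absorbed into Theorem \ref{theorem:UC}. The only mildly delicate point is the constant bookkeeping between the $\epsilon/2$ slack used for ERM optimality and the covering radius appearing in Theorem \ref{theorem:UC}: one must make sure the doubling of $\epsilon$ does not shrink the covering radius below $\epsilon/6$. This is handled by choosing a slightly larger covering radius inside the uniform convergence derivation (the $2r+|L_S(h_r)-L_\Dd(h_r)|+2r$ decomposition gives freedom to trade radius against Hoeffding mass), which lets the final statement retain the cleaner $\Nn_{\infty,1}(\Hh, \epsilon/6)$ term. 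A minor secondary detail is the existence of the minimizer $h^\star$; if it does not exist, replace it by any $h$ with $L_\Dd(h) \le \inf_{h' \in \Hh} L_\Dd(h') + \eta$ and let $\eta \downarrow 0$ at the end of the chain of inequalities.
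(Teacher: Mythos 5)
Your proposal follows exactly the paper's route: invoke \cref{theorem:UC} at accuracy $\epsilon/2$, then chain $L_\Dd(\hat h)\le L_S(\hat h)+\epsilon/2\le L_S(h^\star)+\epsilon/2\le L_\Dd(h^\star)+\epsilon$ on the $\epsilon/2$-representative event; the paper's proof is literally ``define $m_\Hh(\epsilon,\delta):=m_\Hh^{UC}(\epsilon/2,\delta)$'' followed by this same three-step chain, so the argument is correct and essentially identical (your handling of a non-attained infimum is a small extra nicety the paper omits).

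One remark on the ``delicate point'' you flag: your proposed fix is backwards. Substituting $\epsilon/2$ into \cref{theorem:UC} produces $\frac{18}{\epsilon^2}\bigl(\ln\frac{2}{\delta}+\ln\Nn_{\infty,1}(\Hh,\frac{\epsilon}{12})\bigr)$, and since $\Nn_{\infty,1}(\Hh,\cdot)$ is non-increasing in the radius, the stated bound with $\Nn_{\infty,1}(\Hh,\frac{\epsilon}{6})$ is \emph{smaller} than what the substitution requires; shrinking the covering radius further to $\epsilon/24$ only enlarges the covering number and cannot recover the $\epsilon/6$ in the statement. Indeed, with radius $r$ the decomposition forces $4r<\epsilon/2$, i.e.\ $r<\epsilon/8$, so $r=\epsilon/6$ is genuinely unattainable here; the $\epsilon/6$ in the theorem statement is a constant-bookkeeping slip already present in the paper (its proof silently identifies $m_\Hh^{UC}(\epsilon/2,\delta)$ with the displayed formula), and the honest statement should read $\Nn_{\infty,1}(\Hh,\frac{\epsilon}{12})$. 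This changes nothing asymptotically: under \cref{asp:H:cover} the sample complexity remains polynomial in $1/\epsilon$ and $\ln(1/\delta)$.
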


\cref{theorem:PAC} provides the (agnostic) PAC learnability of NE.
Under \cref{asp:H:cover}, when using a training set with size larger than a polynomial of $1/\epsilon$ and $\ln(1/\delta)$, with probability at least $1 - \delta$ the learned NE approximator can reach the near-optimal performance in $\Hh$ up to $\epsilon$.
As we will demonstrate by experiments in \cref{sec:experiments}, even equipped with the most simple neural architectures, the learned NE approximator can efficiently compute approximate NE solutions for games under the same distribution.

\begin{remark}
While realizability assumption~\citep{shalev2014understanding}, i.e., the assumption that $\min_{h \in \Hh} L_\Dd(h) = 0$, is adopted in many  PAC analyses~\citep{krishnamurthy2016pac, jin2021bellman}, however, it is not feasible in our case.
Due to the non-smooth issue, we discussed in \cref{tab:non-smooth}, it remains an open question whether there is a hypothesis class that satisfies the realizability assumption with limited complexity.
As a result, we consider \emph{agnostic} PAC learnability.
\end{remark}

\section{Experiments and Application}
\label{sec:experiments}

In this section, we first provide numerical experiments to verify the practicality of our PAC result.
Specifically, we construct a parameterized model as our hypothesis class and train an NE approximator via \cref{alg:training}.
We show that the learned NE approximator is computation-efficient with low generation loss.
Afterward, we demonstrate an application for the NE approximator: It can warm-start other NE solvers in bimatrix games by providing effective initializing points.
All of our experiments are run on a Linux machine with $48$ core Intel(R) Xeon(R) CPU (E5-2650 v4@2.20GHz) and $4$ TITAN V GPU.
Each experiment is run by $5$ times, and the average results are presented.

\subsection{Experimental Setup} 
We use GAMUT\footnote{\url{http://gamut.stanford.edu/}}~\citep{nudelman2004run}, a suite of game generators designated for testing game-theoretic algorithms, to generate the game instances.
We select $5$ game classes as our data distribution since they are nontrivial for TS algorithm~\citep{TS0.3393} to solve~\citep{fearnley2015empirical}: 
\begin{itemize}
\item
\textit{TravelersDilemma:} 
Each player simultaneously requests an amount of money and receives the lowest of the requests submitted by all players.

\item \textit{GrabTheDollar:} A price is up for grabs, and both players have to decide when to grab the price. The action of each player is the chosen times. 
If both players grab it simultaneously, they will rip the price and receive a low payoff.
If one chooses a time earlier than the other, she will receive the high payoff, and the opposing player will receive a payoff between the high and the low.

\item \textit{WarOfAttrition:} In this game, both players compete for a single object, and each chooses a time to concede the object to the other player. 
If both concede at the same time, they share the object. 
Each player has a valuation of the object, and each player's utility is decremented at every time step.

\item \textit{BertrandOligopoly:} All players in this game are producing the same item and are expected to set a price at which to sell the item. The player with the lowest price gets all the demand for the item and produces enough items to meet the demand to obtain the corresponding payoff.

\item 
\textit{MajorityVoting:} 
This is an $n$-player symmetric game.
All players vote for one of the $|A_1|$ candidates. 
Players' utilities for each candidate being declared the winner are arbitrary. 
If there is a tie, the winner is the candidate with the lowest number. 
There may be multiple Nash equilibria in this game.
\end{itemize}

For bimatrix games, we set the game size as $300\times 300$.
For multiplayer games, we generate the $3$ and $4$ player versions of \textit{BertrandOligopoly} and \textit{MajorityVoting} (The suffix -$3$ and -$4$ represent the $3$ and $4$ player versions, respectively).
We set the game size as $30\times 30\times 30$ for $3$-player games and $15\times 15\times 15\times 15$ for $4$-player games.
For each game class, we generate $2\times 10^4$ game instances with different random seeds, and we randomly divide $2000$ and $200$ instances for validation and testing.

As for the NE approximator, we construct a fully connected neural network as the hypothesis class due to the universal approximation theorem of it~\citep{Hornik1989MultilayerFN}.
We apply ReLU as the activation function and add batch normalization (without learnable parameters) before the activation function.
We use $4$ hidden layers with $1024$ nodes of each layer in our neural network.
We learn our model using the Adam optimizer, and we restrict the parameters of our model in the range of $[0, 1]$.
By doing so, we make our model a Lipschitz hypothesis class so that it satisfies \cref{asp:H:cover}.

\subsection{Generalization and Efficiency}

\begin{table*}[t]
    \centering
    \caption{
    	The average Nash approximation loss (and the corresponding standard deviation across random seeds) of the learned NE approximator on training and testing, compared with random solutions.
    	All the games are $300\times 300$ bimatrix games.
    }
    \begin{sc}
    \resizebox{\textwidth}{!}{
    \begin{tabular}{ccccccc}
        \toprule
        & \textit{TravelersDilemma} & \textit{GrabTheDollar} & \textit{WarOfAttrition} & \textit{BertrandOligopoly} & \textit{MajorityVoting}  \\
        & $\Nap$ & $\Nap$ & $\Nap$ & $\Nap$ & $\Nap$ \\ 
        \midrule \midrule
        Random & 0.2644 \scriptsize{$\pm$ 2.57e-4} & 0.2603 \scriptsize{$\pm$ 2.78e-4} & 0.3396 \scriptsize{$\pm$ 2.65e-4} & 0.3208 \scriptsize{$\pm$ 3.86e-4} & 0.4727 \scriptsize{$\pm$ 5.98e-4} \\
        \midrule
        Train & 1.013e-6 \scriptsize{$\pm$ 1.07e-7} & 8.328e-5 \scriptsize{$\pm$ 7.87e-5} & 2.984e-7 \scriptsize{$\pm$ 1.65e-8} & 3.402e-4 \scriptsize{$\pm$ 5.48e-5} & 4.416e-6 \scriptsize{$\pm$ 1.02e-6} \\
        Test & 
        0.991e-6 \scriptsize{$\pm$ 1.04e-7} & 4.823e-5 \scriptsize{$\pm$ 5.36e-5} & 2.871e-7 \scriptsize{$\pm$ 1.89e-8} & 3.338e-4 \scriptsize{$\pm$ 4.90e-5} & 5.526e-6 \scriptsize{$\pm$ 2.04e-6} \\
        \bottomrule
    \end{tabular}
    }
    \label{tab:loss}
    \end{sc}

\end{table*}

\begin{table*}[t]
	\centering
	\caption{
    	The average Nash approximation loss (and the corresponding standard deviation across random seeds) of the learned NE approximator on training and testing, compared with random solutions.
	    The game dimension is $30\times 30\times 30$ for $3$-player games and $15\times 15\times 15\times 15$ for $4$-player games. 
	}
	\begin{sc}
	\resizebox{\textwidth}{!}{
	\begin{tabular}{ccccc}
		\toprule
		& BertrandOligopoly-3 & MajorityVoting-3 & BertrandOligopoly-4 &  MajorityVoting-4 \\
		& $\Nap$ & $\Nap$ & $\Nap$ & $\Nap$ \\ 
		\midrule \midrule
		Random & 0.1145 \scriptsize{$\pm$ 9.16e-4} & 0.3534 \scriptsize{$\pm$ 1.11e-3} & 0.0573 \scriptsize{$\pm$ 5.41e-4} & 0.2428 \scriptsize{$\pm$ 1.34e-3}\\
		\midrule 
		Train & 4.046e-6 \scriptsize{$\pm$ 7.22e-6} & 1.018e-3 \scriptsize{$\pm$ 3.45e-4} & 1.619e-7 \scriptsize{$\pm$ 3.51e-8} & 3.881e-4 \scriptsize{$\pm$ 1.45e-4} \\
		Test & 2.525e-6 \scriptsize{$\pm$ 4.20e-6} & 0.612e-3 \scriptsize{$\pm$ 2.54e-4} & 1.643e-7 \scriptsize{$\pm$ 3.53e-8} & 2.359e-4 \scriptsize{$\pm$ 4.36e-4}\\
		\bottomrule
	\end{tabular}
	}
	\label{tab:loss2}
	\end{sc}
	
\end{table*}
\paragraph{Generalization}
\cref{tab:loss} and \cref{tab:loss2} report the average Nash approximation loss of the trained NE approximator. 
We observe that the Nash approximation loss in the test set is sufficiently small and much lower than the random solutions.
Such a comparison result inspires us to use the predicted solution as the initial point for classical solvers.
Moreover, we can also see a small gap between the training and testing performance, which gives the feasibility of estimating the true risk of the NE approximator through its empirical risk on the training set. 
It also verifies the generalization bound in \cref{theorem:GB}.

\begin{table*}[t]
    \centering
    \caption{
    The average time and iterations traditional algorithms spent on each test set to reach the same performance as the NE approximator (NEA) in $300\times 300$ bimatrix games.
    $^*$ represents the method fails to reach the same performance under the limitation of the maximum iterations in some of the $5$ runs.
    }
    \resizebox{\textwidth}{!}{
    \begin{sc}
    \begin{tabular}{crrrrrrrrrr}
        \toprule
        & \multicolumn{2}{c}{\textit{TravelersDilemma}} & \multicolumn{2}{c}{\textit{GrabTheDollar}} & \multicolumn{2}{c}{\textit{WarOfAttrition}} & \multicolumn{2}{c}{\textit{BertrandOligopoly}} & \multicolumn{2}{c}{\textit{MajorityVoting}}  
        \\
        & Time & Iteration & Time & Iteration & Time & Iteration & Time & Iteration & Time & Iteration  \\ 
        \midrule \midrule
        FP & $^*$99.9s & $^*$100000 & $^*$95.2s & $^*$100000 & 57.6s & 59919.6 & 61.5s & 63794.8 & 77.1s & 80452.6 \\
        RM & 162.2s & 85442.2 & $^*$190.3s & $^*$98788.2 & 149.3s & 79151.2 & 28.1s & 14544.0 & $^*$189.4s & $^*$100000 \\
        RD & 4.3s & 3813.4 & 2.5s & 2212.4 & 2.8s & 2482.8 & 1.0s & 826.2 & $^*$118.7s & $^*$100000 \\
        TS & 149.9s & -- & 47.1s & -- & 41.9s & -- & 26.6s & -- & 44.1s & -- \\
        DFM & 147.6s & -- & 45.8s & -- & 39.8s & -- & 27.1s & -- & 44.7s & -- \\
        \midrule
        NEA & \textbf{$\bm <$0.5s} & \textbf{1.0} & \textbf{$\bm <$0.5s} & \textbf{1.0} & \textbf{$\bm <$0.5s} & \textbf{1.0} & \textbf{$\bm <$0.5s} & \textbf{1.0} & \textbf{$\bm <$0.5s} & \textbf{1.0} \\
        \bottomrule
    \end{tabular}
    \label{tab:time}
    \end{sc}
    }

\end{table*}

\begin{table*}[t]
\centering
\caption{
The average time and iterations traditional algorithms spent on each test set to reach the same performance as the NE approximator (NEA) in $30\times 30\times 30$ and $15\times 15\times 15\times 15$ games.
$^*$ represents the method fails to reach the same performance under the limitation of the maximum iterations in some of the $5$ runs.
}
\resizebox{\textwidth}{!}{
\begin{sc}
\begin{tabular}{crrrrrrrr}
    \toprule
    \multirow{2}{*}{Methods} & \multicolumn{2}{c}{BertrandOligopoly-3} & \multicolumn{2}{c}{MajorityVoting-3} & \multicolumn{2}{c}{BertrandOligopoly-4} & \multicolumn{2}{c}{MajorityVoting-4}  \\
    & Time & Iteration & Time & Iteration & Time & Iteration & Time & Iteration \\ 
    \midrule \midrule
    FP & $^*$186.3s & $^*$100000 & 2.5s & 1273.8 & $^*$311.6s & $^*$100000 & 183.8s & 59324.0 \\
    RM & $^*$258.6s & $^*$100000 & 19.8s & 7395.6 & $^*$393.3s & $^*$100000 & 106.5s & 27331.6 \\
    RD & 33.0s & 28629.0 & 0.9s & 607.2 & 50.6s & 23100.2 & 143.3s & 61431.2 \\
    \midrule
    NEA & \textbf{$\bm <$0.5s} & \textbf{1.0} & \textbf{$\bm <$0.5s} & \textbf{1.0} & \textbf{$\bm <$0.5s} & \textbf{1.0} & \textbf{$\bm <$0.5s} & \textbf{1.0} \\
    \bottomrule
\end{tabular}
\label{tab:time2}
\end{sc}
}
\end{table*}
\paragraph{Efficiency}
Notice that the NE approximator has never seen the test game instances in training.
The approximate solution is obtained by just a simple feed-forward neural network computation.
As a result, it can be used to infer approximate solutions quickly.
To better demonstrate the efficiency of the trained NE approximator, we record the time and iterations traditional algorithms spent (on the test set) to reach the same performance.
We use the following algorithms: 
\begin{itemize}
\item
\textit{Fictitious play (FP)}~\citep{monderer1996fictitious}: The most well-known learning algorithm to approximate Nash equilibrium; 
\item
\textit{Regret matching (RM)}~\citep{hart2000simple}: Representative method of no-regret learning, and it leads to coarse correlated equilibrium.
\item
\textit{Replicator dynamics (RD)}~\citep{schuster1983replicator}: A system of differential equations that describe how a population of strategies, or replicators, evolve through time. 
\item
\textit{TS Algorithm} (TS): The algorithm proposed by~\citet{TS0.3393}. It reaches an approximation ratio $\epsilon=0.3393$ for bimatrix games.
\item 
\textit{DFM algorithm} (DFM): The algorithm proposed by~\citet{DFM1/3}. It is an improved version of the TS algorithm, and reaches the current best approximation ratio $\epsilon=1/3$ for bimatrix games.
\end{itemize}

During implementation, we use GPU to speed up the computation of the baselines.
TS and DFM algorithm cannot be accelerated by GPU, so we run them on CPU. 
For FP, RM and RD, we set the maximum number of iterations to $100000$ and terminate the algorithm once it reaches the same performance. 
TS and DFM algorithm terminates with probability $1$, so we stop them early if the same performance has already been reached.

We present the efficiency results of bimatrix game in \cref{tab:time} and multiplayer game in \cref{tab:time2}. 
While the NE approximator efficiently comes up with an approximate solution, the baseline methods spend much more time to reach the same performance.
Sometimes the learning approaches FP, RM and RD even fail to converge to the same performance as NE approximator.

\subsection{Application: Warm-Start Classical Solvers}
\begin{figure}
\centering
\begin{subfigure}[b]{0.45\textwidth}
	\centering
	\includegraphics[width=\textwidth]{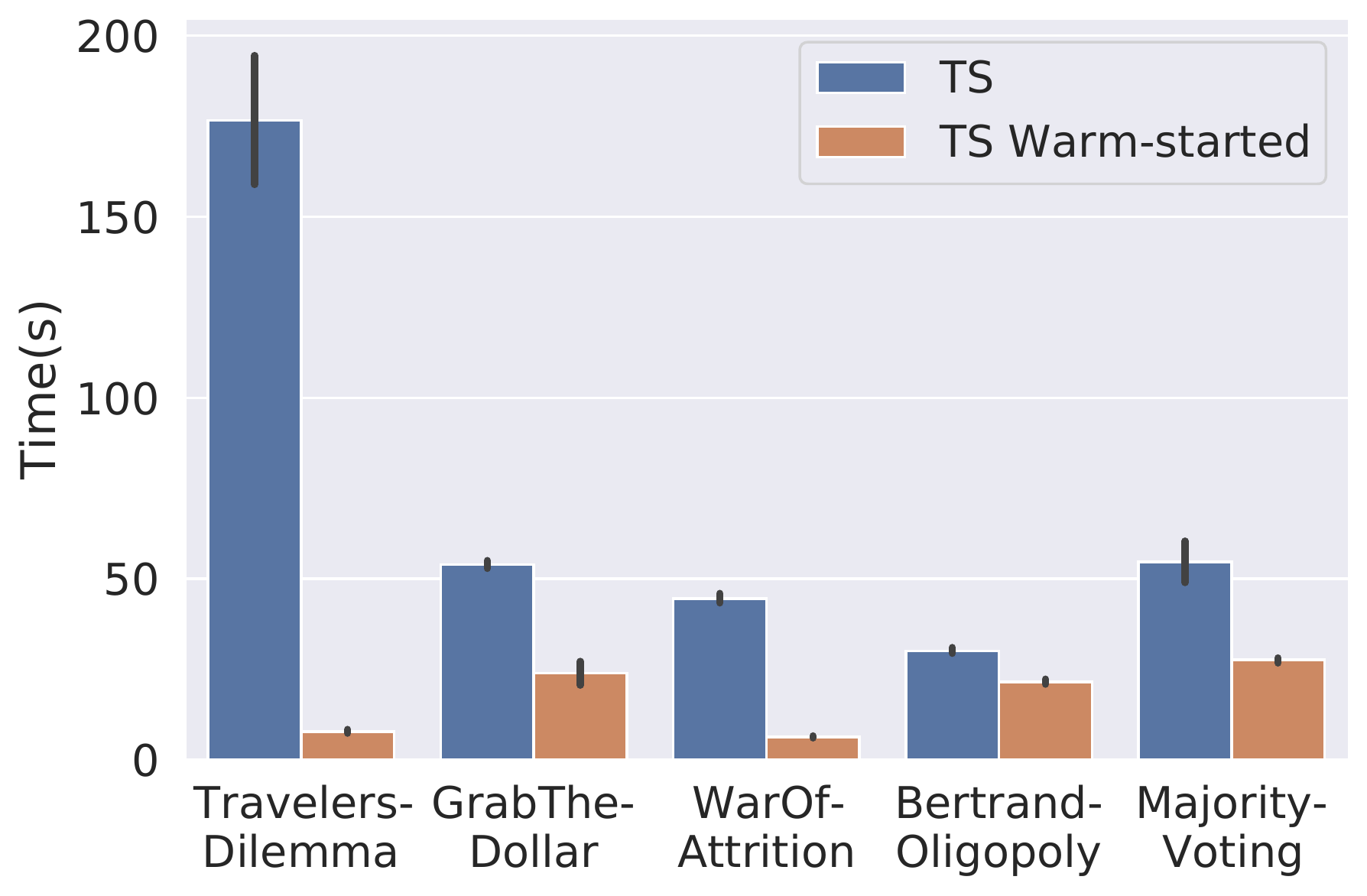}
	\caption{Warm-starting TS algorithm.
	}
	\label{fig:init_TS}
\end{subfigure}
\hfill
\begin{subfigure}[b]{0.45\textwidth}
	\centering
	\includegraphics[width=\textwidth]{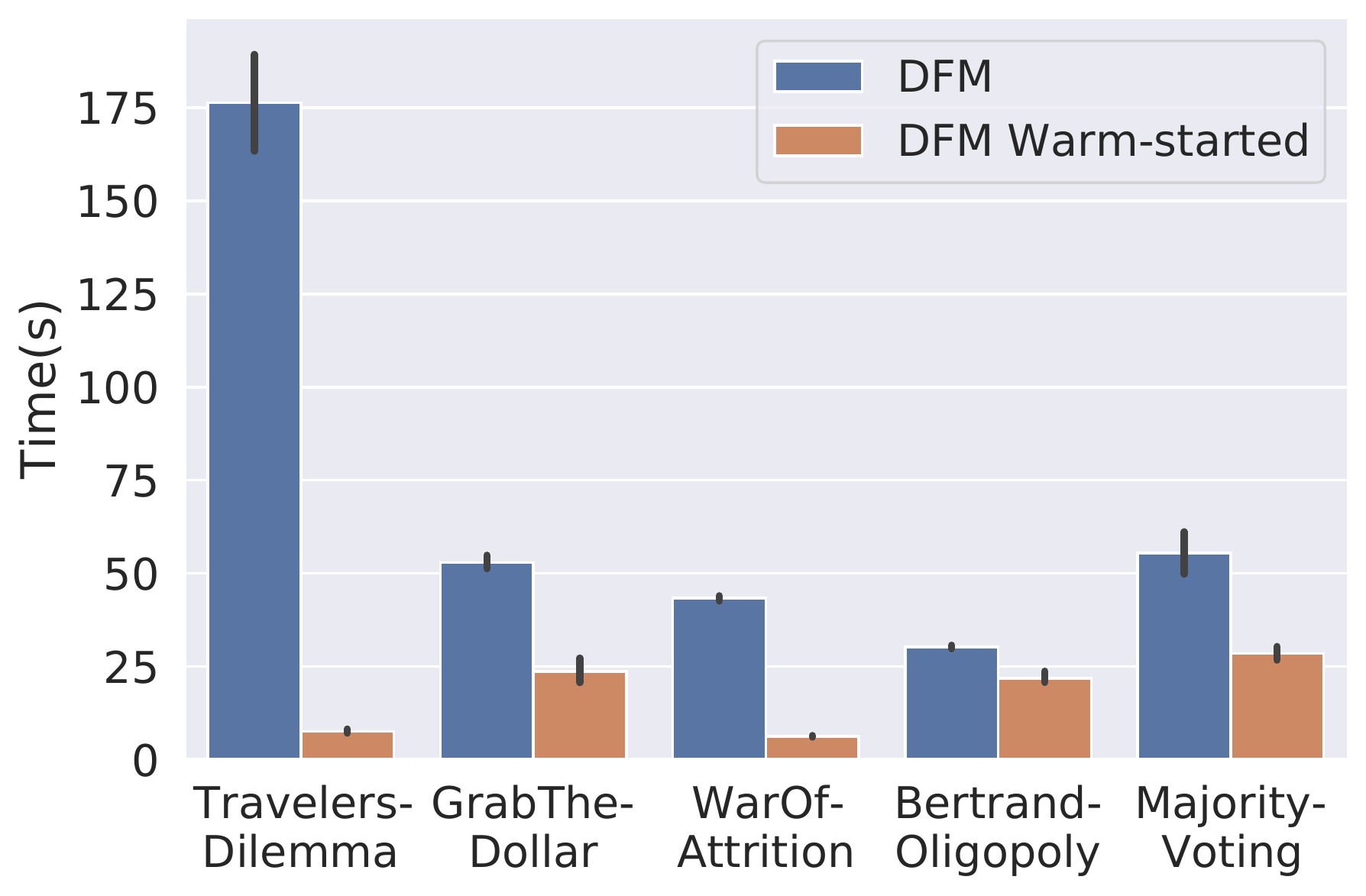}
	\caption{Warm-starting DFM algorithm.
	}
	\label{fig:init_DFM}
\end{subfigure}
\caption{
	Experimental results of warm-starting TS algorithm and DFM algorithm with NE approximator.  
	Each experiment is run by $5$ times. 
	Average results and $95\%$ confidence intervals are shown.
}
\end{figure}
As we can see from the previous experiments, the NE approximator could be efficient for the games under the same distribution.
Moreover, it can achieve a better Nash approximation loss than random solutions. 
Meanwhile, the classical NE solvers, such as the TS and DFM algorithm, usually set random strategies as the starting point.
Therefore, it is promising to warm-start those algorithms with the NE approximator.
By doing so, we benefit from both advantages of the function-approximation method and the traditional approach.
The NE approximator can infer initial solutions in batches with low computational costs, and the classical solvers can provide theoretical guarantees.

\cref{fig:init_TS} and \cref{fig:init_DFM} report the experimental results of warm-starting TS and DFM algorithm, respectively.
We can observe that by taking the output strategies of the NE approximator as the pre-solving initialization, both TS and DFM algorithms spend less time to terminate, especially in game \textit{TravelersDilemma} 
and \textit{WarOfAttrition}.
Notice that both algorithms ensure that the final solutions will be better than the initial solutions.
Thus, it would always be helpful to provide a good starting point for them.

\section{Conclusion}
\label{sec:conclusion}

In this paper, we study the learnability of predicting NE in $n$-player normal-form games with fixed action space.
Theoretically, we provide a generalization bound for the NE approximator under Nash approximation loss.
The bound gives a theoretical guarantee of the generalization ability.
We then prove that Nash equilibrium is agnostic PAC learnable.
Such a result provides the feasibility of obtaining a good NE approximator via empirical risk minimization.
Empirically, we conduct numerical experiments to verify the learned NE approximator's generalization ability and efficiency.
Afterward, we demonstrate the application of the NE approximator to warm-start other classical solvers, and we report fast convergence.
Our theoretical and empirical results show the practicability of learning an NE approximator via data-driven approach.
As for future work, we are interested in extending the learnability results of the NE approximator to settings beyond normal-form games, and exploring a more efficient hypothesis class for the NE approximator.

\section*{Acknowledgement}
This work is supported by the National Natural Science Foundation of China (Grant No. 62172012).
We thank Xiang Yan, Dongge Wang, David Mguni and Kun Shao for various helpful discussions.
We thank all anonymous reviewers for their helpful feedback.

\bibliographystyle{plainnat}
\bibliography{_reference}

\begin{thebibliography}{64}
\providecommand{\natexlab}[1]{#1}
\providecommand{\url}[1]{\texttt{#1}}
\expandafter\ifx\csname urlstyle\endcsname\relax
  \providecommand{\doi}[1]{doi: #1}\else
  \providecommand{\doi}{doi: \begingroup \urlstyle{rm}\Url}\fi

\bibitem[Allen et~al.(2022)Allen, Gabriel, and Dickerson]{allen2022using}
Stephanie Allen, Steven~A Gabriel, and John~P Dickerson.
\newblock Using inverse optimization to learn cost functions in generalized
  {Nash} games.
\newblock \emph{Computers \& Operations Research}, 142:\penalty0 105721, 2022.

\bibitem[Anthony et~al.(1999)Anthony, Bartlett, Bartlett,
  et~al.]{anthony1999neural}
Martin Anthony, Peter~L Bartlett, Peter~L Bartlett, et~al.
\newblock \emph{Neural network learning: Theoretical foundations}, volume~9.
\newblock cambridge university press Cambridge, 1999.

\bibitem[Arora et~al.(2012)Arora, Hazan, and Kale]{arora2012multiplicative}
Sanjeev Arora, Elad Hazan, and Satyen Kale.
\newblock The multiplicative weights update method: a meta-algorithm and
  applications.
\newblock \emph{Theory of Computing}, 8\penalty0 (1):\penalty0 121--164, 2012.

\bibitem[Auer et~al.(1995)Auer, Cesa-Bianchi, Freund, and
  Schapire]{auer1995gambling}
Peter Auer, Nicolo Cesa-Bianchi, Yoav Freund, and Robert~E Schapire.
\newblock Gambling in a rigged casino: The adversarial multi-armed bandit
  problem.
\newblock In \emph{Proceedings of IEEE 36th Annual Foundations of Computer
  Science}, pages 322--331. IEEE, 1995.

\bibitem[Bai et~al.(2020)Bai, Jin, and Yu]{bai2020near}
Yu~Bai, Chi Jin, and Tiancheng Yu.
\newblock Near-optimal reinforcement learning with self-play.
\newblock \emph{Advances in neural information processing systems},
  33:\penalty0 2159--2170, 2020.

\bibitem[Bertsimas et~al.(2015)Bertsimas, Gupta, and
  Paschalidis]{bertsimas2015data}
Dimitris Bertsimas, Vishal Gupta, and Ioannis~Ch Paschalidis.
\newblock Data-driven estimation in equilibrium using inverse optimization.
\newblock \emph{Mathematical Programming}, 153\penalty0 (2):\penalty0 595--633,
  2015.

\bibitem[Bosse et~al.(2007)Bosse, Byrka, and Markakis]{bosse2007new}
Hartwig Bosse, Jaroslaw Byrka, and Evangelos Markakis.
\newblock New algorithms for approximate {Nash} equilibria in bimatrix games.
\newblock In \emph{International Workshop on Web and Internet Economics}, pages
  17--29. Springer, 2007.

\bibitem[Cesa-Bianchi and Lugosi(2006)]{cesa2006prediction}
Nicolo Cesa-Bianchi and G{\'a}bor Lugosi.
\newblock \emph{Prediction, learning, and games}.
\newblock Cambridge university press, 2006.

\bibitem[Chen et~al.(2009)Chen, Deng, and Teng]{chen2009settling}
Xi~Chen, Xiaotie Deng, and Shang-Hua Teng.
\newblock Settling the complexity of computing two-player {Nash} equilibria.
\newblock \emph{Journal of the ACM (JACM)}, 56\penalty0 (3):\penalty0 1--57,
  2009.

\bibitem[Chen et~al.(2021)Chen, Deng, Huang, Li, and Li]{chen2021tightness}
Zhaohua Chen, Xiaotie Deng, Wenhan Huang, Hanyu Li, and Yuhao Li.
\newblock On tightness of the tsaknakis-spirakis algorithm for approximate
  {Nash} equilibrium.
\newblock In \emph{Algorithmic Game Theory: 14th International Symposium, SAGT
  2021, Aarhus, Denmark, September 21--24, 2021, Proceedings 14}, pages
  97--111. Springer, 2021.

\bibitem[Conitzer(2009)]{conitzer2009approximation}
Vincent Conitzer.
\newblock Approximation guarantees for fictitious play.
\newblock In \emph{2009 47th Annual Allerton Conference on Communication,
  Control, and Computing (Allerton)}, pages 636--643. IEEE, 2009.

\bibitem[Czumaj et~al.(2019)Czumaj, Deligkas, Fasoulakis, Fearnley,
  Jurdzi{\'n}ski, and Savani]{czumaj2019distributed}
Artur Czumaj, Argyrios Deligkas, Michail Fasoulakis, John Fearnley, Marcin
  Jurdzi{\'n}ski, and Rahul Savani.
\newblock Distributed methods for computing approximate equilibria.
\newblock \emph{Algorithmica}, 81\penalty0 (3):\penalty0 1205--1231, 2019.

\bibitem[Daskalakis(2013)]{daskalakis2013complexity}
Constantinos Daskalakis.
\newblock On the complexity of approximating a {Nash} equilibrium.
\newblock \emph{ACM Transactions on Algorithms (TALG)}, 9\penalty0
  (3):\penalty0 1--35, 2013.

\bibitem[Daskalakis et~al.(2007)Daskalakis, Mehta, and
  Papadimitriou]{daskalakis2007progress}
Constantinos Daskalakis, Aranyak Mehta, and Christos Papadimitriou.
\newblock Progress in approximate {Nash} equilibria.
\newblock In \emph{Proceedings of the 8th ACM Conference on Electronic
  Commerce}, pages 355--358, 2007.

\bibitem[Daskalakis et~al.(2009{\natexlab{a}})Daskalakis, Goldberg, and
  Papadimitriou]{daskalakis2009complexity}
Constantinos Daskalakis, Paul~W Goldberg, and Christos~H Papadimitriou.
\newblock The complexity of computing a {Nash} equilibrium.
\newblock \emph{SIAM Journal on Computing}, 39\penalty0 (1):\penalty0 195--259,
  2009{\natexlab{a}}.

\bibitem[Daskalakis et~al.(2009{\natexlab{b}})Daskalakis, Mehta, and
  Papadimitriou]{daskalakis2009note}
Constantinos Daskalakis, Aranyak Mehta, and Christos Papadimitriou.
\newblock A note on approximate {Nash} equilibria.
\newblock \emph{Theoretical Computer Science}, 410\penalty0 (17):\penalty0
  1581--1588, 2009{\natexlab{b}}.

\bibitem[Deligkas et~al.(2022)Deligkas, Fasoulakis, and Markakis]{DFM1/3}
Argyrios Deligkas, Michail Fasoulakis, and Evangelos Markakis.
\newblock A polynomial-time algorithm for 1/3-approximate {Nash} equilibria in
  bimatrix games.
\newblock In Shiri Chechik, Gonzalo Navarro, Eva Rotenberg, and Grzegorz
  Herman, editors, \emph{30th Annual European Symposium on Algorithms, {ESA}
  2022, September 5-9, 2022, Berlin/Potsdam, Germany}, volume 244 of
  \emph{LIPIcs}, pages 41:1--41:14. Schloss Dagstuhl - Leibniz-Zentrum
  f{\"{u}}r Informatik, 2022.
\newblock \doi{10.4230/LIPIcs.ESA.2022.41}.
\newblock URL \url{https://doi.org/10.4230/LIPIcs.ESA.2022.41}.

\bibitem[Deng and Zhu(2019)]{deng2019bayesian}
Xiaotie Deng and Keyu Zhu.
\newblock On bayesian epistemology of myerson auction.
\newblock \emph{IEEE Transactions on Cloud Computing}, 9\penalty0 (3):\penalty0
  1172--1179, 2019.

\bibitem[Deng et~al.(2017)Deng, Xiao, and Zhu]{deng2017learn}
Xiaotie Deng, Tao Xiao, and Keyu Zhu.
\newblock Learn to play maximum revenue auction.
\newblock \emph{IEEE Transactions on Cloud Computing}, 7\penalty0 (4):\penalty0
  1057--1067, 2017.

\bibitem[Deng et~al.(2020)Deng, Lin, and Xiao]{deng2020private}
Xiaotie Deng, Tao Lin, and Tao Xiao.
\newblock Private data manipulation in optimal sponsored search auction.
\newblock In \emph{Proceedings of The Web Conference 2020}, pages 2676--2682,
  2020.

\bibitem[Dinh et~al.(2022)Dinh, McAleer, Tian, Perez-Nieves, Slumbers, Mguni,
  Wang, Ammar, and Yang]{dinh2022online}
Le~Cong Dinh, Stephen~Marcus McAleer, Zheng Tian, Nicolas Perez-Nieves, Oliver
  Slumbers, David~Henry Mguni, Jun Wang, Haitham~Bou Ammar, and Yaodong Yang.
\newblock Online double oracle.
\newblock \emph{Transactions on Machine Learning Research}, 2022.
\newblock ISSN 2835-8856.
\newblock URL \url{https://openreview.net/forum?id=rrMK6hYNSx}.

\bibitem[Fearnley and Savani(2016)]{fearnley2016finding}
John Fearnley and Rahul Savani.
\newblock Finding approximate {Nash} equilibria of bimatrix games via payoff
  queries.
\newblock \emph{ACM Transactions on Economics and Computation (TEAC)},
  4\penalty0 (4):\penalty0 1--19, 2016.

\bibitem[Fearnley et~al.(2015{\natexlab{a}})Fearnley, Gairing, Goldberg, and
  Savani]{fearnley2015learning}
John Fearnley, Martin Gairing, Paul~W Goldberg, and Rahul Savani.
\newblock Learning equilibria of games via payoff queries.
\newblock \emph{J. Mach. Learn. Res.}, 16:\penalty0 1305--1344,
  2015{\natexlab{a}}.

\bibitem[Fearnley et~al.(2015{\natexlab{b}})Fearnley, Igwe, and
  Savani]{fearnley2015empirical}
John Fearnley, Tobenna~Peter Igwe, and Rahul Savani.
\newblock An empirical study of finding approximate equilibria in bimatrix
  games.
\newblock In \emph{International Symposium on Experimental Algorithms}, pages
  339--351. Springer, 2015{\natexlab{b}}.

\bibitem[Fele and Margellos(2020)]{fele2020probably}
Filiberto Fele and Kostas Margellos.
\newblock Probably approximately correct {Nash} equilibrium learning.
\newblock \emph{IEEE Transactions on Automatic Control}, pages 4238--4245,
  2020.

\bibitem[Feng et~al.(2021)Feng, Slumbers, Wan, Liu, McAleer, Wen, Wang, and
  Yang]{feng2021neural}
Xidong Feng, Oliver Slumbers, Ziyu Wan, Bo~Liu, Stephen McAleer, Ying Wen, Jun
  Wang, and Yaodong Yang.
\newblock Neural auto-curricula in two-player zero-sum games.
\newblock \emph{Advances in Neural Information Processing Systems},
  34:\penalty0 3504--3517, 2021.

\bibitem[Goodfellow et~al.(2014)Goodfellow, Pouget-Abadie, Mirza, Xu,
  Warde-Farley, Ozair, Courville, and Bengio]{goodfellow2014generative}
Ian Goodfellow, Jean Pouget-Abadie, Mehdi Mirza, Bing Xu, David Warde-Farley,
  Sherjil Ozair, Aaron Courville, and Yoshua Bengio.
\newblock Generative adversarial nets.
\newblock \emph{Advances in neural information processing systems}, 27, 2014.

\bibitem[Goodfellow et~al.(2016)Goodfellow, Bengio, and
  Courville]{goodfellow2016deep}
Ian Goodfellow, Yoshua Bengio, and Aaron Courville.
\newblock \emph{Deep learning}.
\newblock MIT press, 2016.

\bibitem[Harris et~al.(2023)Harris, Anagnostides, Farina, Khodak, Wu, and
  Sandholm]{harris2023metalearning}
Keegan Harris, Ioannis Anagnostides, Gabriele Farina, Mikhail Khodak, Steven
  Wu, and Tuomas Sandholm.
\newblock Meta-learning in games.
\newblock In \emph{International Conference on Learning Representations}, 2023.
\newblock URL \url{https://openreview.net/forum?id=uHaWaNhCvZD}.

\bibitem[Hart and Mas-Colell(2000)]{hart2000simple}
Sergiu Hart and Andreu Mas-Colell.
\newblock A simple adaptive procedure leading to correlated equilibrium.
\newblock \emph{Econometrica}, 68\penalty0 (5):\penalty0 1127--1150, 2000.

\bibitem[Haussler(1990)]{haussler1990probably}
David Haussler.
\newblock \emph{Probably approximately correct learning}.
\newblock University of California, Santa Cruz, Computer Research Laboratory,
  1990.

\bibitem[Heaton et~al.(2021)Heaton, McKenzie, Li, Fung, Osher, and
  Yin]{heaton2021learn}
Howard Heaton, Daniel McKenzie, Qiuwei Li, Samy~Wu Fung, Stanley Osher, and
  Wotao Yin.
\newblock Learn to predict equilibria via fixed point networks.
\newblock \emph{arXiv preprint arXiv:2106.00906}, 2021.

\bibitem[Hornik et~al.(1989)Hornik, Stinchcombe, and
  White]{Hornik1989MultilayerFN}
Kurt Hornik, Maxwell~B. Stinchcombe, and Halbert~L. White.
\newblock Multilayer feedforward networks are universal approximators.
\newblock \emph{Neural Networks}, 2:\penalty0 359--366, 1989.

\bibitem[Hu and Wellman(2003)]{hu2003nash}
Junling Hu and Michael~P Wellman.
\newblock {Nash} q-learning for general-sum stochastic games.
\newblock \emph{Journal of machine learning research}, 4\penalty0
  (Nov):\penalty0 1039--1069, 2003.

\bibitem[Jin et~al.(2021)Jin, Liu, and Miryoosefi]{jin2021bellman}
Chi Jin, Qinghua Liu, and Sobhan Miryoosefi.
\newblock Bellman eluder dimension: New rich classes of rl problems, and
  sample-efficient algorithms.
\newblock \emph{Advances in neural information processing systems},
  34:\penalty0 13406--13418, 2021.

\bibitem[Jin et~al.(2022)Jin, Liu, Wang, and Yu]{jin2021v}
Chi Jin, Qinghua Liu, Yuanhao Wang, and Tiancheng Yu.
\newblock V-learning -- a simple, efficient, decentralized algorithm for
  multiagent {RL}.
\newblock In \emph{ICLR 2022 Workshop on Gamification and Multiagent
  Solutions}, 2022.

\bibitem[Kontogiannis and Spirakis(2007)]{kontogiannis2007efficient}
Spyros~C Kontogiannis and Paul~G Spirakis.
\newblock Efficient algorithms for constant well supported approximate
  equilibria in bimatrix games.
\newblock In \emph{International Colloquium on Automata, Languages, and
  Programming}, pages 595--606. Springer, 2007.

\bibitem[Kontogiannis et~al.(2006)Kontogiannis, Panagopoulou, and
  Spirakis]{kontogiannis2006polynomial}
Spyros~C Kontogiannis, Panagiota~N Panagopoulou, and Paul~G Spirakis.
\newblock Polynomial algorithms for approximating {Nash} equilibria of bimatrix
  games.
\newblock In \emph{International Workshop on Internet and Network Economics},
  pages 286--296. Springer, 2006.

\bibitem[Krishnamurthy et~al.(2016)Krishnamurthy, Agarwal, and
  Langford]{krishnamurthy2016pac}
Akshay Krishnamurthy, Alekh Agarwal, and John Langford.
\newblock {PAC} reinforcement learning with rich observations.
\newblock \emph{Advances in Neural Information Processing Systems}, 29, 2016.

\bibitem[Lanctot et~al.(2017)Lanctot, Zambaldi, Gruslys, Lazaridou, Tuyls,
  P{\'e}rolat, Silver, and Graepel]{lanctot2017unified}
Marc Lanctot, Vinicius Zambaldi, Audrunas Gruslys, Angeliki Lazaridou, Karl
  Tuyls, Julien P{\'e}rolat, David Silver, and Thore Graepel.
\newblock A unified game-theoretic approach to multiagent reinforcement
  learning.
\newblock \emph{Advances in neural information processing systems}, 30, 2017.

\bibitem[Li et~al.(2020)Li, Yu, Nie, and Wang]{li2020end}
Jiayang Li, Jing Yu, Yu~Nie, and Zhaoran Wang.
\newblock End-to-end learning and intervention in games.
\newblock \emph{Advances in Neural Information Processing Systems}, 33, 2020.

\bibitem[Ling et~al.(2018)Ling, Fang, and Kolter]{ling2018game}
Chun~Kai Ling, Fei Fang, and J.~Zico Kolter.
\newblock What game are we playing? end-to-end learning in normal and extensive
  form games.
\newblock In \emph{Proceedings of the Twenty-Seventh International Joint
  Conference on Artificial Intelligence, {IJCAI-18}}, pages 396--402.
  International Joint Conferences on Artificial Intelligence Organization, 7
  2018.
\newblock \doi{10.24963/ijcai.2018/55}.
\newblock URL \url{https://doi.org/10.24963/ijcai.2018/55}.

\bibitem[Ling et~al.(2019)Ling, Fang, and Kolter]{ling2019large}
Chun~Kai Ling, Fei Fang, and J~Zico Kolter.
\newblock Large scale learning of agent rationality in two-player zero-sum
  games.
\newblock In \emph{Proceedings of the AAAI Conference on Artificial
  Intelligence}, volume~33, pages 6104--6111, 2019.

\bibitem[Lockhart et~al.(2019)Lockhart, Lanctot, Pérolat, Lespiau, Morrill,
  Timbers, and Tuyls]{lockhart2019computing}
Edward Lockhart, Marc Lanctot, Julien Pérolat, Jean-Baptiste Lespiau, Dustin
  Morrill, Finbarr Timbers, and Karl Tuyls.
\newblock Computing approximate equilibria in sequential adversarial games by
  exploitability descent.
\newblock In Sarit Kraus, editor, \emph{IJCAI}, pages 464--470. ijcai.org,
  2019.

\bibitem[{Marchesi} et~al.(2020){Marchesi}, {Trovò}, and
  {Gatti}]{marchesi2020learning}
Alberto {Marchesi}, Francesco {Trovò}, and Nicola {Gatti}.
\newblock Learning probably approximately correct maximin strategies in
  simulation-based games with infinite strategy spaces.
\newblock In \emph{Proceedings of the 19th International Conference on
  Autonomous Agents and MultiAgent Systems}, pages 834--842, 2020.

\bibitem[Marris et~al.(2022)Marris, Gemp, Anthony, Tacchetti, Liu, and
  Tuyls]{marris2022turbocharging}
Luke Marris, Ian Gemp, Thomas Anthony, Andrea Tacchetti, Siqi Liu, and Karl
  Tuyls.
\newblock Turbocharging solution concepts: Solving {NE}s, {CE}s and {CCE}s with
  neural equilibrium solvers.
\newblock In Alice~H. Oh, Alekh Agarwal, Danielle Belgrave, and Kyunghyun Cho,
  editors, \emph{Advances in Neural Information Processing Systems}, 2022.
\newblock URL \url{https://openreview.net/forum?id=RczPtvlaXPH}.

\bibitem[McMahan et~al.(2003)McMahan, Gordon, and Blum]{mcmahan2003planning}
H~Brendan McMahan, Geoffrey~J Gordon, and Avrim Blum.
\newblock Planning in the presence of cost functions controlled by an
  adversary.
\newblock In \emph{Proceedings of the 20th International Conference on Machine
  Learning (ICML-03)}, pages 536--543, 2003.

\bibitem[Monderer and Shapley(1996)]{monderer1996fictitious}
Dov Monderer and Lloyd~S Shapley.
\newblock Fictitious play property for games with identical interests.
\newblock \emph{Journal of economic theory}, 68\penalty0 (1):\penalty0
  258--265, 1996.

\bibitem[Nash et~al.(1950)]{nash1950equilibrium}
John~F Nash et~al.
\newblock Equilibrium points in n-person games.
\newblock \emph{Proceedings of the national academy of sciences}, 36\penalty0
  (1):\penalty0 48--49, 1950.

\bibitem[Nudelman et~al.(2004)Nudelman, Wortman, Shoham, and
  Leyton-Brown]{nudelman2004run}
Eugene Nudelman, Jennifer Wortman, Yoav Shoham, and Kevin Leyton-Brown.
\newblock Run the gamut: A comprehensive approach to evaluating game-theoretic
  algorithms.
\newblock In \emph{AAMAS}, volume~4, pages 880--887, 2004.

\bibitem[Perez-Nieves et~al.(2021)Perez-Nieves, Yang, Slumbers, Mguni, Wen, and
  Wang]{perez2021modelling}
Nicolas Perez-Nieves, Yaodong Yang, Oliver Slumbers, David~H Mguni, Ying Wen,
  and Jun Wang.
\newblock Modelling behavioural diversity for learning in open-ended games.
\newblock In \emph{International Conference on Machine Learning}, pages
  8514--8524. PMLR, 2021.

\bibitem[Scaman and Virmaux(2018)]{scaman2018lipschitz}
Kevin Scaman and Aladin Virmaux.
\newblock Lipschitz regularity of deep neural networks: analysis and efficient
  estimation.
\newblock In \emph{NeurIPS}, pages 3839--3848, 2018.

\bibitem[Schuster and Sigmund(1983)]{schuster1983replicator}
Peter Schuster and Karl Sigmund.
\newblock Replicator dynamics.
\newblock \emph{Journal of theoretical biology}, 100\penalty0 (3):\penalty0
  533--538, 1983.

\bibitem[Sessa et~al.(2020)Sessa, Bogunovic, Krause, and
  Kamgarpour]{sessa2020contextual}
Pier~Giuseppe Sessa, Ilija Bogunovic, Andreas Krause, and Maryam Kamgarpour.
\newblock Contextual games: Multi-agent learning with side information.
\newblock \emph{Advances in Neural Information Processing Systems},
  33:\penalty0 21912--21922, 2020.

\bibitem[Shalev-Shwartz and Ben-David(2014)]{shalev2014understanding}
Shai Shalev-Shwartz and Shai Ben-David.
\newblock \emph{Understanding machine learning: From theory to algorithms}.
\newblock Cambridge university press, 2014.

\bibitem[Shoham and Leyton-Brown(2008)]{shoham2008multiagent}
Yoav Shoham and Kevin Leyton-Brown.
\newblock \emph{Multiagent systems: Algorithmic, game-theoretic, and logical
  foundations}.
\newblock Cambridge University Press, 2008.

\bibitem[Szegedy et~al.(2014)Szegedy, Zaremba, Sutskever, Bruna, Erhan,
  Goodfellow, and Fergus]{szegedy2013intriguing}
Christian Szegedy, Wojciech Zaremba, Ilya Sutskever, Joan Bruna, Dumitru Erhan,
  Ian~J. Goodfellow, and Rob Fergus.
\newblock Intriguing properties of neural networks.
\newblock In Yoshua Bengio and Yann LeCun, editors, \emph{2nd International
  Conference on Learning Representations, {ICLR} 2014, Banff, AB, Canada, April
  14-16, 2014, Conference Track Proceedings}, 2014.

\bibitem[Tsaknakis and Spirakis(2007)]{TS0.3393}
Haralampos Tsaknakis and Paul~G Spirakis.
\newblock An optimization approach for approximate {Nash} equilibria.
\newblock In \emph{International Workshop on Web and Internet Economics}, pages
  42--56. Springer, 2007.

\bibitem[Valiant(1984)]{valiant1984theory}
Leslie~G Valiant.
\newblock A theory of the learnable.
\newblock \emph{Communications of the ACM}, 27\penalty0 (11):\penalty0
  1134--1142, 1984.

\bibitem[Viqueira et~al.(2019)Viqueira, Cousins, Upfal, and
  Greenwald]{viqueira2019learning}
Enrique~Areyan Viqueira, Cyrus Cousins, Eli Upfal, and Amy Greenwald.
\newblock Learning equilibria of simulation-based games.
\newblock \emph{arXiv preprint arXiv:1905.13379}, 2019.

\bibitem[Yang and Wang(2020)]{yang2020overview}
Yaodong Yang and Jun Wang.
\newblock An overview of multi-agent reinforcement learning from game
  theoretical perspective.
\newblock \emph{arXiv preprint arXiv:2011.00583}, 2020.

\bibitem[Zhang and Paschalidis(2017)]{zhang2017data}
Jing Zhang and Ioannis~Ch Paschalidis.
\newblock Data-driven estimation of travel latency cost functions via inverse
  optimization in multi-class transportation networks.
\newblock In \emph{2017 IEEE 56th Annual Conference on Decision and Control
  (CDC)}, pages 6295--6300. IEEE, 2017.

\bibitem[Zhang(2002)]{zhang2002covering}
Tong Zhang.
\newblock Covering number bounds of certain regularized linear function
  classes.
\newblock \emph{Journal of Machine Learning Research}, 2\penalty0
  (Mar):\penalty0 527--550, 2002.

\bibitem[Zhou(2002)]{zhou2002covering}
Ding-Xuan Zhou.
\newblock The covering number in learning theory.
\newblock \emph{Journal of Complexity}, 18\penalty0 (3):\penalty0 739--767,
  2002.

\end{thebibliography}

\clearpage
\appendix
\onecolumn

\section{Omitted Proofs}
\label{sec:proof}

\subsection{Proof of \cref{lemma:L_sigma}}
\label{sec:proof:L_sigma}
\LemLSigma*
\begin{proof}
	$\forall \sigma, \sigma'$, we define $y_{-j} \coloneqq (\sigma_1, \dots, \sigma_{j-1}, \sigma'_{j+1}, \dots, \sigma'_n)$. 
	Then, $\forall i \in N$ we have
	\begin{equation*}
		\begin{aligned}
			|u_i(\sigma) - u_i(\sigma')|
			=& |u_i(\sigma_1, \sigma_2, \dots, \sigma_n) - u_i(\sigma'_1, \sigma'_2, \dots, \sigma'_n)| 
			\\
			=& \Big| \sum_{j=1}^n \Big(u_i(\sigma_1, \dots, \sigma_j, \sigma'_{j+1}, \dots, \sigma'_n) 
			- u_i(\sigma_1, \dots, \sigma'_j, \sigma'_{j+1}, \dots, \sigma'_n) \Big)\Big|, 
			\\
			=& \Big| \sum_{j=1}^n \Big(u_i(\sigma_j, y_{-j}) - u_i(\sigma'_j, y_{-j}) \Big)\Big|
			\\
			=& \Big| \sum_{j=1}^n \sum_{a_j}(\sigma_j(a_j) - \sigma'_j(a_j)) \sum_{a_{-j}} u_i(a_j, a_{-j})y_{-j}(a_{-j})  \Big|
			\\
			\le& \sum_{j=1}^n \sum_{a_j}\Big|\sigma_j(a_j) - \sigma'_j(a_j)\Big|\sum_{a_{-j}} u_i(a_j, a_{-j})y_{-j}(a_{-j}) 
			\\
			\overset{(a)}{\le}& \sum_{j=1}^n \sum_{a_j}\Big|\sigma_j(a_j) - \sigma'_j(a_j)\Big|\sum_{a_{-j}}y_{-j}(a_{-j})
			\\
			\le& \sum_{j=1}^n \sum_{a_j\in A_j}\Big|\sigma_j(a_j) - \sigma'_j(a_j)\Big|
			= \norm{\sigma - \sigma'}_1,
		\end{aligned}
	\end{equation*}
	where $(a)$ holds since $u_i(\cdot) \in [0, 1]$.
	Therefore, $\forall a_i \in A_i$, 
	\begin{equation*}
		\begin{aligned}
			u_i(a_i, \sigma_{-i}) - u_i(\sigma) =& u_i(a_i, \sigma_{-i}) - u_i(a_i, \sigma'_{-i}) + u_i(a_i, \sigma'_{-i}) - u_i(\sigma') + u_i(\sigma') - u_i(\sigma)
			\\
			\le& \norm{\sigma - \sigma'}_1 + \Nap(\sigma', u) + \norm{\sigma - \sigma'}_1
			\\
			=& \Nap(\sigma', u) + 2\norm{\sigma - \sigma'}_1.
		\end{aligned}
	\end{equation*}
	Based on that, we get
	\begin{equation*}
		\begin{aligned}
			\Nap(\sigma, u) 
			=& \max_{i\in N, a_i \in A_i}[u_i(a_i, \sigma_{-i}) - u_i(\sigma)]
			\\
			\le& \Nap(\sigma', u) + 2\norm{\sigma - \sigma'}_1
		\end{aligned}
	\end{equation*}
	Similarly, we also have
	\begin{equation*}
		\Nap(\sigma', u) 
		\le \Nap(\sigma, u) + 2\norm{\sigma - \sigma'}_1
	\end{equation*}
\end{proof}

\subsection{Proof of \cref{lemma:L_u}}
\label{sec:proof:L_u}
\LemLU*
\begin{proof}
$\forall u, v \in \Uu, \sigma \in \Delta A_1 \times A_2 \times \dots \times A_n, i \in N, a_i \in A_i$, we have
\begin{equation*}
	\begin{aligned}
		u_i(a_i, \sigma_{-i}) =& v_i(a_i, \sigma_{-i}) + (u_i(a_i, \sigma_{-i}) - v_i(a_i, \sigma_{-i})) 
		\\
		\le& v_i(a_i, \sigma_{-i}) + \|u-v\|_{\max},
		\\
		\le& v_i(\sigma) + \Nap(\sigma, v) + \|u-v\|_{\max},
		\\
		\le& u_i(\sigma) + \Nap(\sigma, v) + 2\|u-v\|_{\max}
	\end{aligned}
\end{equation*}
Therefore,
\begin{equation*}
	\begin{aligned}
		\Nap(\sigma, u) &= \max_{i\in N, a_i \in A_i}[u_i(a_i, \sigma_{-i}) - u_i(\sigma)] \le \Nap(\sigma, v) + 2\|u-v\|_{\max}
	\end{aligned}
\end{equation*}
Similarly, we also have
\begin{equation*}
	\Nap(\sigma, v) \le \Nap(\sigma, u) + 2\|u-v\|_{\max}
\end{equation*}
\end{proof}

\subsection{Proof of \cref{theorem:GB}}
\ThmGB*
To prove \cref{theorem:GB}, we use an auxiliary lemma from \citet{shalev2014understanding}.
We measure the capacity of the composite function class $\Nap \circ \Hh$ using the empirical Rademacher complexity on the training set $S$, which is defined as:
\begin{equation*}
\begin{aligned}
	\Rr_S(\Nap \circ \Hh) \coloneqq \frac{1}{m}\EE_{\bm x \sim \{+1,-1\}^m}\Big[\sup_{h\in\Hh} \sum_{i=1}^m x_i \cdot \Nap(h(u^{(i)}), u^{(i)}) \Big],
\end{aligned}
\end{equation*}
where $\bm x$ is distributed i.i.d. according to uniform distribution in $\{+1,-1\}$.
We have

\begin{lemma}[\citet{shalev2014understanding}]
\label{lemma:Rad}
Let $S$ be a training set of size $m$ drawn i.i.d. from distribution $\Dd$ over $\Uu$.
Then with probability at least $1 - \delta$ over draw of $S$ from $\Dd$, for all $h \in \Hh$,
\begin{equation*}
	L_\Dd(h) - L_S(h) \le 2\Rr_S(\Nap \circ \Hh) + 4\sqrt{\frac{2\ln(4/\delta)}{m}}
\end{equation*}
\end{lemma}

\begin{proof}[Proof of \cref{theorem:GB}]
For hypothesis class $\Hh$, let $\Hh_r$ with $|\Hh_r| = \Nn_{\infty,1}(\Hh, r)$ be the function class that $r$-covers $\Hh$ for some $r>0$. 
$\forall h \in \Hh$, denote $h_r \in \Hh_r$ be the function approximator that $r$-covers $h$.
Based on \cref{lemma:L_sigma}, we have
\begin{equation}
	\begin{aligned}
		\label{eq:L_sigma_r} 
		|\Nap(h(u), u) - \Nap(h_r(u), u)| \le 2\|h(u) - {h}_r(u)\|_1 \le 2r
	\end{aligned}
\end{equation}

We thus have
\begin{equation}
	\label{eq:Rad}
	\begin{aligned}
		\Rr_S(&\Nap \circ \Hh) 
		= \frac{1}{m}\EE_{\bm x}\Big[\sup_h \sum_{i=1}^m x_i \cdot \Nap(h(u^{(i)}), u^{(i)}) \Big]
		\\
		=& \frac{1}{m}\EE_{\bm x}\Big[\sup_h \sum_{i=1}^m x_i \cdot \big(\Nap(h_r(u^{(i)}), u^{(i)}) 
		\\
		&+ \Nap(h(u^{(i)}), u^{(i)})  - \Nap(h_r(u^{(i)}), u^{(i)})\big)  \Big]
		\\
		\overset{(a)}{\le}& \frac{1}{m}\EE_{\bm x}\Big[\sup_{h_r \in \Hh_r} \sum_{i=1}^m x_i \cdot \Nap(h_r(u^{(i)}), u^{(i)})  \Big]
		\\
		&+ \frac{1}{m}\EE_{\bm x}\Big[\sup_{h\in \Hh} \sum_{i=1}^m |x_i \cdot 2r| \Big]
		\\
		\overset{(b)}{\le}& \sup_{h_r\in \Hh_r}\sqrt{\sum_{i=1}^m \ell^2(h_r, u^{(i)})} \cdot \frac{\sqrt{2\ln\Nn_{\infty,1}(\Hh, r)}}{m} + \frac{2r}{m}\EE_{\bm x}\norm{\bm x}_1
		\\
		\le& \sqrt{\frac{2\ln\Nn_{\infty,1}(\Hh, r)}{m}}+ 2r
	\end{aligned}
\end{equation}
where the second term of $(a)$ holds from \cref{eq:L_sigma_r}, and the first term of $(b)$ holds by Massart's lemma~\citep{shalev2014understanding}.

Combining \cref{lemma:Rad} and \cref{eq:Rad}, we get

\begin{equation*}
	L_\Dd(h) - L_S(h) \le 2\cdot \inf_{r>0}\Big\{ \sqrt{\frac{2\ln\Nn_{\infty,1}(\Hh, r)}{m}}+ 2r \Big\}
	+ 4\sqrt{\frac{2\ln(4/\delta)}{m}}
\end{equation*}
\end{proof}

\subsection{Proof of \cref{Lip:cover}}
\label{sec:L_H}
\LipCover*
\begin{proof}

The proof is done by construction.

\paragraph{Construct 1:} A $\nu$-covering set $\Uu_\nu \subset \Uu$ for $\Uu$ with respect to $\ell_{\max}$ distance.

We do so by discretizing 
each element along $[0, 1]$ at scale $\nu$. 
We discretize it into $\{0, \nu, 2\nu, \dots,  \lfloor \frac{1}{\nu} \rfloor \nu\}$.
Therefore, we have 
$
	|\Uu_\nu| = \left\lceil \frac{1}{\nu} \right\rceil^{\Dim}
$.

\paragraph{Construct 2:} A $\mu$-covering set $\Pi_\mu = \Pi_{\mu, 1} \times \dots \Pi_{\mu, n} \subset \Delta A_1 \times \dots \times \Delta A_n$ with respect to $\ell_1$ distance for the range of $\Hh$. For all $i \in N$, $\Pi_{\mu, i} \in \Delta A_i$ is a $\frac{\mu}{n}$-covering set of $\Delta A_i$ w.r.t. $\ell_1$ distance.

First, we define $g(x)$ for $x \in (0, 1)$ as the maximum value $y \in (0, x]$ such that $\frac{1}{y}$ is an integer. From the definition, we can easily get the following corollary:
\begin{corollary}
	$g(x) \in (\frac{x}{10}, x]$ for $x \in (0, 1)$.
\end{corollary}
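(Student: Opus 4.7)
The plan is to rewrite $g(x)$ in closed form and then obtain both bounds directly. Observe that $y \in (0,x]$ satisfies $1/y \in \mathbb{Z}$ precisely when $y = 1/k$ for some positive integer $k$ with $k \ge 1/x$. The maximum such $y$ corresponds to the minimum such $k$, so
\begin{equation*}
    g(x) \;=\; \frac{1}{\lceil 1/x \rceil}.
\end{equation*}
This identity does the heavy lifting; the rest is arithmetic.

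For the upper bound, $\lceil 1/x \rceil \ge 1/x$ gives $g(x) \le x$ immediately, with equality exactly when $1/x$ is an integer (in which case the corollary is trivial).

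For the lower bound $g(x) > x/10$, I would split on whether $1/x$ is an integer. If it is, $g(x) = x > x/10$. Otherwise, use the standard estimate $\lceil 1/x \rceil < 1/x + 1$. Since $x \in (0,1)$ implies $1/x > 1$, we have $1/x + 1 < 2/x$, so
\begin{equation*}
    g(x) \;=\; \frac{1}{\lceil 1/x \rceil} \;>\; \frac{1}{1/x + 1} \;>\; \frac{x}{2} \;>\; \frac{x}{10}.
\end{equation*}
This completes both bounds.

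There is essentially no obstacle: the only subtlety is recognizing that $g(x)$ is the reciprocal of a ceiling, after which the bounds follow from the elementary inequality $\lceil t \rceil < t+1$. The stated constant $10$ is loose (the argument yields $x/2$), which suggests the authors only need a crude lower bound for downstream use and chose a convenient constant.
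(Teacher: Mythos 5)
Your proof is correct; the paper states this corollary without proof (it is asserted to follow "from the definition"), and your argument via the identity $g(x) = 1/\lceil 1/x\rceil$ together with $\lceil t\rceil < t+1$ is exactly the intended elementary verification. You are also right that it yields the sharper bound $g(x) > x/2$; the constant $10$ in the paper is simply a loose choice that suffices for the downstream covering-number estimate.
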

We construct $\Pi_{\mu, i}$ by discretizing each element  along $[0, 1]$ at scale $g(\frac{\mu}{n|A_i|})$. 
By doing so, $\Pi_{\mu, i}$ is a $\frac{\mu}{n}$-covering set of $\Delta A_i$. 
The cardinal number of $\Pi_{\mu, i}$ is a combination number: 
\begin{equation*}
	\begin{aligned}
		|\Pi_{\mu, i}| =& \binom{(g(\frac{\mu}{n|A_i|}))^{-1} + |A_i|}{|A_i| - 1}
		\le \left(\frac{e(g(\frac{\mu}{n|A_i|}))^{-1} + e|A_i|}{|A_i| - 1}  \right)^{|A_i| - 1} 
		< \left(\frac{e(\frac{10n|A_i|}{\mu}) + e|A_i|}{|A_i| - 1}  \right)^{|A_i| - 1}, 
	\end{aligned}
\end{equation*}
So that 
\begin{equation*}
	\begin{aligned}
		|\Pi_\mu| = \prod_{i\in N} |\Pi_{\mu, i}|
		= \prod_{i\in N} \binom{(g(\frac{\mu}{n|A_i|}))^{-1} + |A_i|}{|A_i| - 1}
		\le \prod_{i\in N} \left(\frac{e(\frac{10n|A_i|}{\mu}) + e|A_i|}{|A_i| - 1}  \right)^{|A_i| - 1}
		= O((\frac{1}{\mu}) ^{\sum_{i \in N}|A_i| - n})
	\end{aligned}
\end{equation*}

\paragraph{Construct 3:} A $r$-covering set $\Hh_r$ of $\Hh$ with respect to $\ell_{\infty, 1}$-distance.

For all $u \in \Uu$, we define 
$u_\nu \in \Uu_\nu$ as the closed utility matrix to $u$ in $\Uu_\nu$ (with arbitrary tie-breaking rule) so that we have $\norm{u_\nu - u}_{\max} \le \nu$.
Based on this, we construct an auxiliary function class $\Ff_{\nu, \mu}$.
It contains all the functions $f: \Uu \to \Pi_\nu$ that satisfy $f(u) = f(u_\nu)$ for all $u \in \Uu$. By the definition of $\Ff_{\nu, \mu}$, we have
\begin{equation*}
	|\Ff_{\nu, \mu}| = |\Pi_\mu|^{|\Uu_\nu|}
\end{equation*}

For all $h \in \Hh$, denote $h_{\nu, \mu}$ as the closed function to $h$ in $\Ff_{\nu, \mu}$ with respect to $\ell_{\infty, 1}$-distance. We have $\norm{h - h_{\nu, \mu}}_{\infty,1} \le \mu$. Then we have

\begin{equation}
	\label{eq:L_H:nu:mu}
	\begin{aligned}
		|\Nap(h(u), u) - \Nap(h_{\nu, \mu}(u), u)| \le& 2\|h(u) - h_{\nu, \mu}(u)\|_1
		\\
		=& 2\|h(u) - h_{\nu, \mu}(u_\nu)\|_1
		\\
		\le& 2\|h(u) - h(u_\nu)\|_1 + 2\|h(u_\nu) - h_{\nu, \mu}(u_\nu)\|_1
		\\
		\le& 2L_\Hh\nu + 2\mu
	\end{aligned}
\end{equation}

Let $\Hh_r = \Ff_{\frac{r}{4L_\Hh}, \frac{r}{4}}$. According to \cref{eq:L_H:nu:mu}, $\Hh_r$ is a $r$-covering set of $\Hh$ with respect to $\ell_{\infty, 1}$ distance. 
Therefore,
\begin{equation*}
	\begin{aligned}
		\ln\Nn_{\infty, 1}(\Hh, r) \le& \ln|\Hh_r| = |\Uu_{\frac{r}{4L_\Hh}}|\ln|\Pi_{\frac{r}{4}}|
		\le \left\lceil \frac{4L_\Hh}{r} \right\rceil^{\Dim} \sum_{i\in N}(|A_i| - 1)\ln\left(\frac{e(\frac{40n|A_i|}{r}) + e|A_i|}{|A_i| - 1} \right)
		\\
		=& O\left((\frac{L_\Hh}{r})^{\Dim}\ln\frac{1}{r}\right)
	\end{aligned}
\end{equation*}
\end{proof}

\subsection{Proof of \cref{theorem:UC}}
\ThmUC*
\begin{proof}
	$\forall \epsilon \in (0, 1)$, we set the covering radius $r=\frac{\epsilon}{6}$. 
	Then,
	\begin{equation*}
		\begin{aligned}
			&\PP_{S\sim\Dd^m}\Big[\exists h\in\Hh, \big|L_S(h)
			- L_\Dd(h)\big| > \epsilon \Big] 
			\\
			\le& \PP_{S\sim\Dd^m}\Big[ \exists h\in\Hh, \big|L_S(h) - L_S({h}_r)\big| + \big| L_S({h}_r) - L_\Dd({h}_r)\big|
			+ \big|L_\Dd({h}_r) - L_\Dd(h)\big| 
			> \epsilon \Big] 
			\\
			\overset{(a)}{\le}& \PP_{S\sim\Dd^m}\Big[\exists h\in\Hh, 
			2r + \big| L_S({h}_r) - L_\Dd({h}_r)\big| + 2r > \epsilon \Big] 
			\\
			\le& \PP_{S\sim\Dd^m}\Big[\exists {h}_r \in {\Hh}_r, \big|L_S({h}_r) - L_\Dd({h}_r) \big| > \frac{1}{3}\epsilon \Big]
			\\
			\overset{(b)}{\le}& \Nn_{\infty,1}(\Hh, \frac{\epsilon}{6})\PP_{S\sim\Dd^m}\Big[\big|L_S({h}) - L_\Dd({h}) \big| > \frac{1}{3}\epsilon \Big]
			\\
			\overset{(c)}{\le}& 2\Nn_{\infty,1}(\Hh, \frac{\epsilon}{6})\exp(-\frac{2}{9}m\epsilon^2),
		\end{aligned}
	\end{equation*}
	where $(a)$ holds by \cref{eq:L_sigma_r}, $(b)$ holds by union bound, and $(c)$ holds by Hoeffding inequality.
	
	As a result, when $m \ge  \frac{9}{2\epsilon^2}\left(\ln\frac{2}{\delta} + \ln\Nn_{\infty,1}(\Hh, \frac{\epsilon}{6})\right)$, we have
	\begin{equation*}
		\begin{aligned}
			\PP_{S\sim\Dd^m}\Big[\exists h\in \Hh, \Big|L_S(h) - L_\Dd(h)\Big| > \epsilon \Big] < \delta
		\end{aligned}
	\end{equation*}
\end{proof}

\subsection{Proof of \cref{theorem:PAC}}
\label{sec:proof:PAC}
\ThmPAC*
\begin{proof}
	Define $m_\Hh(\epsilon, \delta) := m_\Hh^{UC}(\epsilon/2, \delta)$.
	According to \cref{theorem:UC}, 
	when $m \ge m_\Hh(\epsilon, \delta)$, with probability at least $1-\delta$ the training set $S$ is $\epsilon/2$-representative.
	Therefore, denote $h_S = \mathrm{ERM}_\Hh(S)$ and $h^* \in \arg\min_{h \in \Hh}L_\Dd(h)$, with probability at least $1-\delta$ we have
	\begin{equation*}
		\label{eq:proof:PAC}
		\begin{aligned}
			L_\Dd(h_S) \le& L_S(h_S) + \frac{\epsilon}{2}
			\le L_S(h^*) + \frac{\epsilon}{2}
			\le L_\Dd(h^*) + \frac{\epsilon}{2} + \frac{\epsilon}{2}
			= L_\Dd(h^*) + \epsilon
		\end{aligned}    
	\end{equation*}
\end{proof}
\end{document}